\def\calI{\mathcal{I}}
\def\bbR{\mathbb{R}}
\theoremstyle{plain}
\newtheorem{observation}[theorem]{Observation}
\newcommand{\mparagraph}[1]{\vspace{1.0ex \@plus1ex
      \@minus.2ex}\noindent\textbf{#1}\hspace{1em}}
\title{An Improved Algorithm for Diameter-Optimally Augmenting Paths in a Metric Space}
\titlerunning{Diameter-Optimally Augmenting Paths} 
\author{Haitao Wang}
\affil{Department of Computer Science,
Utah State University, Logan, UT 84322, USA.
\texttt{haitao.wang@usu.edu}
}
\authorrunning{H. Wang} 
\keywords{diameter, path graphs, augmenting paths, minimizing diameter, metric space}
\begin{document}

\maketitle

\begin{abstract}
Let $P$ be a path graph of $n$ vertices embedded in a metric
space. We consider the problem of adding a new edge to $P$ such
that the diameter of the resulting graph is minimized.
Previously (in ICALP 2015) the problem was solved in $O(n\log^3 n)$ time.
In this paper, based on new observations and different algorithmic techniques, we present
an $O(n\log n)$ time algorithm.
\end{abstract}

\section{Introduction}
\label{sec:intro}
Let $P$ be a path graph of $n$ vertices embedded in a metric
space. We consider the problem of adding a new edge to $P$ such
that the diameter of the resulting graph is minimized. The problem is
formally defined as follows.

Let $G$ be a graph and each edge has a non-negative length. The
{\em length} of any path of $G$ is the total length of all edges of the
path. For any two vertices $u$ and $v$ of $G$, we use $d_G(u,v)$ to
denote the length of the shortest path from $u$ to $v$ in $G$. The {\em diameter} of $G$ is
defined as $\max_{u,v\in G}{d_G(u,v)}$.  

Let $P$ be a path graph of $n$ vertices $v_1,v_2,\ldots,v_n$ and
there is an edge $e(v_{i-1},v_i)$ connecting $v_{i-1}$ and $v_i$ for each $1\leq i\leq
n-1$. Let $V$ be the vertex set of $P$.
We assume $(V,|\cdot|)$ is a metric space and $|v_iv_j|$ is the
distance of any two vertices $v_i$ and $v_j$ of $V$. Specifically, the
following properties hold: (1) the triangle inequality:
$|v_iv_k|+|v_kv_j|\geq |v_iv_j|$; (2) $|v_iv_j|=|v_jv_i|\geq 0$; (3) $|v_iv_j|=0$ if $i=j$.
In particular, for each edge $e(v_{i-1},v_i)$ of $P$, its length
is equal to $|v_{i-1}v_i|$.
We assume that given any two vertices $v_i$ and $v_j$ of $P$, the
distance $|v_iv_j|$ can be obtained in $O(1)$ time.

Our goal is to find a new edge
$e$ connecting two vertices of $P$ and
add $e$ to $P$, such that the diameter of the resulting graph $P\cup
\{e\}$ is minimized.

The problem has been studied before. Gro{\ss}e et
al.~\cite{ref:GrobeFa15} solved the problem in $O(n\log^3 n)$ time.
In this paper, we present a new algorithm that runs in $O(n\log n)$
time. Our algorithm is based on new observations on the structures of the optimal solution and different algorithmic techniques. Following the previous
work~\cite{ref:GrobeFa15}, we refer to the problem as {\em the
diameter-optimally augmenting path problem}, or DOAP for short.

\subsection{Related Work}
If the path $P$ is in the Euclidean space $\bbR^d$ for a constant $d$, then Gro{\ss}e et
al.~\cite{ref:GrobeFa15} also gave an $O(n+1/\epsilon^3)$ time
algorithm that can find a $(1+\epsilon)$-approximation solution for
the problem DOAP, for any $\epsilon>0$. If $P$ is in the Euclidean plane $\bbR^2$,
De Carulfel et al.~\cite{ref:DeCarufelMi16} gave a linear time algorithm for
adding a new edge to $P$ to minimize the {\em continuous diameter}
(i.e., the diameter is defined with respect to all points of $P$, not only vertices).

The more general problem and many variations have also been studied
before, e.g., see
\cite{ref:AlonDe00,ref:BiloIm12,ref:DemaineMi12,ref:FratiAu15,ref:GaoTh13,ref:IshiiAu13,ref:LiOn92,ref:SchooneDi97} and the references therein.
Consider a general graph $G$ in which edges have non-negative lengths.
For an integer $k$, the goal of the general problem
is to compute a set $F$ of $k$ new edges
and add them to $G$ such that the resulting graph has the minimum
diameter. The problem is NP-hard \cite{ref:SchooneDi97} and some other variants are
even W[2]-hard \cite{ref:FratiAu15,ref:GaoTh13}.
Approximation results have been given for the
general problem and many of its variations, e.g., see
\cite{ref:BiloIm12,ref:FratiAu15,ref:LiOn92}.
The upper bounds and lower bounds on the values of the diameters
of the augmented graphs have also been investigated, e.g. see
\cite{ref:AlonDe00,ref:IshiiAu13}.

Since diameter is an important metric of network performance, which
measures the worst-case cost between any two nodes of the network, as discussed in
\cite{ref:BiloIm12,ref:DemaineMi12}, the problem of
augmenting graphs for minimizing the diameter and its variations have
many practical applications, such as in data networks, telephone networks,
transportation networks, scheduling problems, etc.

As an application of our problem DOAP, consider the following scenario in transportation networks. Suppose there is a highway that connects several cities. In order to reduce the transportation time, we want to build a new highway connecting two cities such that the distance between the farthest two cities using both highways is minimized. Clearly, this is a problem instance of DOAP.

\subsection{Our Approaches}

To tackle the problem, Gro{\ss}e et al.~\cite{ref:GrobeFa15}
first gave an $O(n\log n)$ time algorithm for
the {\em decision version} of the problem:
Given any value $\lambda$, determine whether it is
possible to add a new edge $e$ into $P$ such that the diameter
of the resulting graph is at most $\lambda$. Then, by implementing the
above decision algorithm in a parallel fashion and applying Megiddo's
parametric search \cite{ref:MegiddoAp83}, they solved
the original problem DOAP in $O(n\log^3 n)$ time~\cite{ref:GrobeFa15}. For
differentiation, we referred to the original problem DOAP as the {\em optimization
problem}.

Our improvement over the previous work \cite{ref:GrobeFa15} is twofold.

First, we solve the decision problem in $O(n)$ time. Our algorithm is
based on the $O(n\log n)$ time algorithm in the previous
work~\cite{ref:GrobeFa15}. However, by discovering new observations on
the problem structure and
with the help of the range-minima data structure~\cite{ref:BenderTh00,ref:HarelFa84},
we avoid certain expensive operations
and eventually achieve the $O(n)$ time complexity.

Second, comparing with the decision problem, our algorithm for the
optimization problem is completely different from the previous work
\cite{ref:GrobeFa15}. Let $\lambda^*$ be the diameter of the resulting
graph in an optimal solution.  Instead of using the parametric search,
we identify a set $S$ of candidate values such that $\lambda^*$ is in $S$ and
then we search $\lambda^*$ in $S$ using our algorithm for the decision
problem. However, computational difficulties arise for this approach
due to that the set $S$ is too large ($|S|=\Omega(n^2)$)
and computing certain values of
$S$ is time-consuming (e.g., for certain values of $S$, computing each
of them takes $O(n)$ time). To circumvent these difficulties, our algorithm
has several steps. In each step, we shrink $S$ significantly such
that $\lambda^*$ always remains in $S$. More
importantly, each step will obtain certain formation, based on which
the next step can further reduce $S$. After several steps, the
size of $S$ is reduced to $O(n)$ and all the remaining values of $S$ can be
computed in $O(n\log n)$ time. At this point we can use our decision
algorithm to find $\lambda^*$ from $S$ in additional $O(n\log n)$
time.
Equipped with our linear time algorithm
for the decision problem and utilizing several other algorithmic
techniques such as the sorted-matrix searching
techniques~\cite{ref:FredericksonGe84,ref:FredericksonFi83} and
range-minima data structure~\cite{ref:BenderTh00,ref:HarelFa84}, we
eventually solve the optimization problem in $O(n\log n)$ time.

The rest of the paper is organized as follows. In Section
\ref{sec:pre}, we introduce some notation and observations.
In Section \ref{sec:decision},
we present our algorithm for the decision problem. The optimization problem is solved in Section \ref{sec:optimization}.

\section{Preliminaries}
\label{sec:pre}
In this section, we introduce some notation and observations, some
of which are from Gro{\ss}e et al.~\cite{ref:GrobeFa15}.

For any two vertices $v_i$ and $v_j$ of $P$, we use $e(v_i,v_j)$ to
denote the edge connecting $v_i$ and $v_j$ in the metric space. Hence,
$e(v_i,v_j)$ is in $P$ if and only if $|i-j|=1$.  The length of
$e(v_i,v_j)$ is $|v_iv_j|$.

For any $i$ and $j$ with $1\leq i\leq j\leq n$, we use $G(i,j)$ to
denote the resulting graph by adding the edge $e(v_i,v_j)$ into
$P$. If $i=j$, $G(i,j)$ is essentially $P$. Let $D(i,j)$ denote the diameter of
$G(i,j)$. 

Our goal for the optimization problem DOAP is to find a pair of
indices $(i,j)$ with $1\leq i\leq j\leq n$ such that $D(i,j)$ is
minimized. Let $\lambda^*=\min_{1\leq i\leq j\leq n}D(i,j)$, i.e.,
 $\lambda^*$ is the diameter in an optimal solution.


Given any value $\lambda$, the decision problem is to determine
whether $\lambda\geq \lambda^*$, or in other words, determine whether
there exist a pair $(i,j)$ with $1\leq i\leq j\leq n$ such that
$D(i,j)\leq \lambda$.
If yes, we say that $\lambda$ is a {\em feasible} value.

Recall that for any graph $G$, $d_{G}(u,v)$ refers to the length of the shortest
path between two vertices $u$ and $v$ in $G$.

\begin{figure}[t]
\begin{minipage}[t]{\textwidth}
\begin{center}
\includegraphics[height=1.0in]{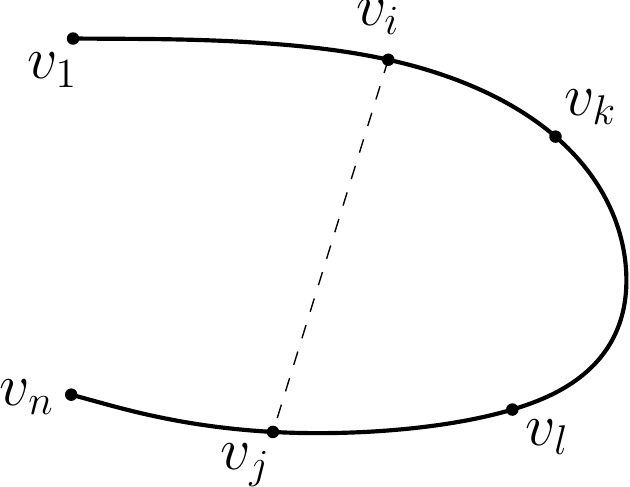}
\caption{\footnotesize Illustrating the resulting graph after a new edge $e(v_i,v_j)$ is added.}
\label{fig:funs}
\end{center}
\end{minipage}
\vspace{-0.15in}
\end{figure}

Consider any pair of indices $(i,j)$ with $1\leq i\leq j\leq n$. We
define $\alpha(i,j)$, $\beta(i,j)$, $\gamma(i,j)$, and $\delta(i,j)$
as follows (refer to Fig.~\ref{fig:funs}).

\begin{definition}
\begin{enumerate}
\item
Define $\alpha(i,j)$ to be the largest shortest path length in
$G(i,j)$ from $v_1$ to all vertices $v_k$ with $k\in [i,j]$, i.e.,
$\alpha(i,j)=\max_{i\leq k\leq j}d_{G(i,j)}(v_1,v_k)$.
\item
Define $\beta(i,j)$ to be the largest shortest path length in
$G(i,j)$ from $v_n$ to all vertices $v_k$ with $k\in [i,j]$, i.e.,
$\beta(i,j)=\max_{i\leq k\leq j}d_{G(i,j)}(v_k,v_n)$.
\item
Define $\gamma(i,j)$ to be the largest shortest path length in
$G(i,j)$ from $v_k$ to $v_l$ for any $k$ and $l$  with $i\leq k\leq
l\leq j$, i.e.,
$\gamma(i,j)=\max_{i\leq k\leq l\leq j}d_{G(i,j)}(v_k,v_l)$.
\item
Define $\delta(i,j)$ to be the shortest path length in
$G(i,j)$ from $v_1$ to $v_n$, i.e.,  $\delta(i,j)=d_{G(i,j)}(v_1,v_n)$.
\end{enumerate}
\end{definition}

It can be verified (also shown in \cite{ref:GrobeFa15}) that the
following observation holds.

\begin{observation}\label{obser:10}{\em(\cite{ref:GrobeFa15})}
$D(i,j)=\max\{\alpha(i,j), \beta(i,j), \gamma(i,j), \delta(i,j)\}$.
\end{observation}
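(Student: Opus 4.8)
The plan is to show the two inequalities $D(i,j)\le \max\{\alpha,\beta,\gamma,\delta\}$ and $D(i,j)\ge \max\{\alpha,\beta,\gamma,\delta\}$ separately, where I abbreviate the four quantities by dropping their arguments. The ``$\ge$'' direction is immediate: each of $\alpha(i,j)$, $\beta(i,j)$, $\gamma(i,j)$, $\delta(i,j)$ is by definition the shortest-path distance in $G(i,j)$ between some specific pair of vertices of $G(i,j)$, so each is at most $\max_{u,v\in G(i,j)}d_{G(i,j)}(u,v)=D(i,j)$; hence their maximum is at most $D(i,j)$.

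For the ``$\le$'' direction I would argue that for every pair of vertices $v_p,v_q$ of $G(i,j)$ (say $p\le q$), the distance $d_{G(i,j)}(v_p,v_q)$ is bounded by one of the four quantities, by a case analysis on where $p$ and $q$ fall relative to the interval $[i,j]$. The key structural fact is that in $G(i,j)$ the only vertices incident to more than two edges are $v_i$ and $v_j$, so the graph consists of three ``pieces'': the left pendant path $v_1,\dots,v_i$, the right pendant path $v_j,\dots,v_n$, and the cycle $C$ formed by the sub-path $v_i,\dots,v_j$ together with the new edge $e(v_i,v_j)$. Any shortest path between two vertices can be rerouted so that it enters and leaves each pendant path only through its unique attachment point ($v_i$ or $v_j$). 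The cases are then: (a) both $p,q\in[i,j]$, handled directly by $\gamma(i,j)$; (b) $p\le i$ and $q\le i$ (both on the left pendant), so $d_{G(i,j)}(v_p,v_q)=d_P(v_p,v_q)\le d_P(v_1,v_i)\le d_{G(i,j)}(v_1,v_i)\le\alpha(i,j)$, and symmetrically when both lie on the right pendant via $\beta(i,j)$; (c) $p\le i$ and $i\le q\le j$, where a shortest path must pass through $v_i$, giving $d_{G(i,j)}(v_p,v_q)=d_{G(i,j)}(v_p,v_i)+d_{G(i,j)}(v_i,v_q)$, and I'd bound the first summand by $\alpha(i,j)$ (it is $\le d_{G(i,j)}(v_1,v_i)$) and the second by $\gamma(i,j)$, but to make this collapse to a single one of the four terms I instead observe directly that $d_{G(i,j)}(v_p,v_q)\le d_{G(i,j)}(v_1,v_q)\le\alpha(i,j)$ since prepending the sub-path from $v_1$ to $v_i$ only lengthens things; symmetrically for $q\ge j$ with $\beta(i,j)$; and (d) $p\le i$ and $q\ge j$, where every path goes through both $v_i$ and $v_j$, so $d_{G(i,j)}(v_p,v_q)=d_{G(i,j)}(v_p,v_i)+d_{G(i,j)}(v_i,v_j)+d_{G(i,j)}(v_j,v_q)$ along the shortest route, and this is at most $d_{G(i,j)}(v_1,v_i)+d_{G(i,j)}(v_i,v_j)+d_{G(i,j)}(v_j,v_n)=d_{G(i,j)}(v_1,v_n)=\delta(i,j)$, using that the $v_1$-to-$v_n$ shortest path necessarily has this form.

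The main obstacle is the bookkeeping in case~(c) and case~(d): one must justify that a shortest path between vertices on opposite pendants (or on a pendant and the cycle) genuinely decomposes as a concatenation through the cut vertices $v_i$ and/or $v_j$, which relies on the pendant paths being ``dead ends'' with a unique point of attachment, and one must be careful that extending a shortest path by appending a fixed sub-path of $P$ does not accidentally create a shortcut — here the metric/triangle-inequality structure is not even needed, only the fact that edge lengths are non-negative and that $v_1,\dots,v_i$ and $v_j,\dots,v_n$ induce no chords in $G(i,j)$. Since this is essentially the verification asserted by Gro{\ss}e et al.~\cite{ref:GrobeFa15}, I would present the case analysis compactly and invoke their statement for the routine details.
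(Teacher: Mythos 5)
Your argument is correct. Note that the paper itself gives no proof of this observation --- it simply cites Gro{\ss}e et al.\ and remarks that it ``can be verified'' --- so there is no in-paper argument to compare against; your case analysis (splitting vertex pairs according to their position relative to the two cut vertices $v_i$ and $v_j$, and using that the two pendant subpaths attach to the cycle $C(i,j)$ at single points) is the standard verification, and the reductions in cases (b)--(d) to $\alpha$, $\beta$, and $\delta$ via monotonicity along the pendant paths all go through.
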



Further, due to the triangle inequality of the metric space, the
following monotonicity properties hold.

\begin{observation}\label{obser:20}{\em(\cite{ref:GrobeFa15})}
\begin{enumerate}
\item
For any $1\leq i\leq j\leq n-1$, $\alpha(i,j)\leq \alpha(i,j+1)$,
$\beta(i,j)\geq \beta(i,j+1)$, $\gamma(i,j)\leq \gamma(i,j+1)$,
and $\delta(i,j)\geq \delta(i,j+1)$.

\item
For any $1\leq i<j\leq n$,
$\alpha(i,j)\leq \alpha(i+1,j)$,
$\beta(i,j)\geq \beta(i+1,j)$, $\gamma(i,j)\geq \gamma(i+1,j)$,
and $\delta(i,j)\leq \delta(i+1,j)$.
\end{enumerate}
\end{observation}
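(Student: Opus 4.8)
The plan is to prove the four monotonicity relations in each part by exhibiting, for every shortest path in the "smaller" graph, a walk of no greater length in the "larger" graph (or vice versa), so that the relevant maxima and distances move in the claimed direction. Throughout I will use only the triangle inequality and the fact that $G(i,j)$ is a subgraph-with-one-extra-edge of $P$. Let me treat Part~1 in detail; Part~2 is entirely symmetric, swapping the roles of the two endpoints $v_1$ and $v_n$ and reasoning about moving the left endpoint of the new edge instead of the right one.

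\textbf{Part 1, the $\delta$ and $\gamma$ inequalities.} For $\delta(i,j)\ge\delta(i,j+1)$: the edge $e(v_i,v_{j+1})$ together with the path edge $e(v_j,v_{j+1})$ of $P$ gives a walk from $v_i$ to $v_j$ in $G(i,j+1)$ of length $|v_iv_{j+1}|+|v_jv_{j+1}|$, but I actually want the reverse comparison, so instead observe that any $v_1$--$v_n$ path in $G(i,j)$ either avoids the new edge $e(v_i,v_j)$ — in which case it lies in $P\subseteq G(i,j+1)$ — or uses it, and then replacing the single hop $v_i\to v_j$ by $v_i\to v_{j+1}\to v_j$ (new edge of $G(i,j+1)$, then one path edge) yields a $v_1$--$v_n$ walk in $G(i,j+1)$; by the triangle inequality $|v_iv_{j+1}|\le|v_iv_j|+|v_jv_{j+1}|$, hmm, this goes the wrong way. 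The cleaner route is: in $G(i,j+1)$, the two-edge walk $v_i\to v_{j+1}\to v_j$ realizes $d_{G(i,j+1)}(v_i,v_j)\le|v_iv_{j+1}|+|v_jv_{j+1}|$, and $P$-subpaths are common to both graphs, from which one derives $\delta(i,j+1)\le\delta(i,j)$ by case analysis on whether the optimal $v_1$--$v_n$ path in $G(i,j+1)$ uses $e(v_i,v_{j+1})$. I will write this out carefully. The inequality $\gamma(i,j)\le\gamma(i,j+1)$ is immediate from the definition, since $\gamma(i,j)$ maximizes over pairs $k,l\in[i,j]\subseteq[i,j+1]$ and every such shortest path in $G(i,j)$ has a counterpart of no greater length in $G(i,j+1)$ by the same edge-replacement argument (and $[i,j]\subseteq[i,j+1]$ enlarges the index range too).

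\textbf{Part 1, the $\alpha$ and $\beta$ inequalities.} For $\alpha(i,j)\le\alpha(i,j+1)$: fix $k\in[i,j]$; a shortest $v_1$--$v_k$ path in $G(i,j)$ either stays in $P$ (hence survives in $G(i,j+1)$) or uses $e(v_i,v_j)$, and in the latter case the prefix up to $v_i$ plus the new edge $e(v_i,v_{j+1})$ of $G(i,j+1)$ plus the reverse path $v_{j+1}\to v_j\to\cdots\to v_k$ along $P$ gives a $v_1$--$v_k$ walk in $G(i,j+1)$; I then invoke the triangle inequality to bound its length by the original, so $d_{G(i,j+1)}(v_1,v_k)\le d_{G(i,j)}(v_1,v_k)$ — wait, I need the max over the larger range $k\in[i,j+1]$ to be at least the max over $[i,j]$, which requires $d_{G(i,j+1)}(v_1,v_k)\ge d_{G(i,j)}(v_1,v_k)$ for $k\in[i,j]$; this holds because the new edge in $G(i,j+1)$ is $e(v_i,v_{j+1})$, not $e(v_i,v_j)$, so any shortcut available in $G(i,j)$ can only be \emph{matched or beaten} in $G(i,j+1)$, giving $\alpha(i,j)\le\alpha(i,j+1)$ once one checks the direction of each replacement. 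The symmetric argument with $v_n$ gives $\beta(i,j)\ge\beta(i,j+1)$. The main obstacle — and the step I expect to need the most care — is exactly this bookkeeping of \emph{which} direction each inequality runs: for $\delta$ and $\beta$ the larger index $j{+}1$ pushes the endpoint farther from the relevant terminal and \emph{shortens} nothing relevant, so distances can only grow or the maximizing range shifts favorably; for $\alpha$ and $\gamma$ it is the reverse. I will organize the proof as a small table of four cases and, in each, name the candidate walk explicitly and cite the triangle inequality once. Part~2 then follows by the identical analysis applied to the left endpoint $i$ versus $i{+}1$, which I will state is symmetric and omit.
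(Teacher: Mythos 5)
The paper does not actually prove this observation --- it is quoted from Gro{\ss}e et al.\ with only the one-line remark that it follows from the triangle inequality --- so your argument has to stand on its own, and as written it does not: the directional claims at its core are wrong in at least two places. For $\gamma$ in Part~1 you assert that every shortest path in $G(i,j)$ ``has a counterpart of no greater length in $G(i,j+1)$''; if that were true it would give $d_{G(i,j+1)}(v_k,v_l)\le d_{G(i,j)}(v_k,v_l)$, which is the opposite of what you need (and is false in general). The correct fact is the reverse: for $i\le k\le l\le j$ the only route competing with $P(k,l)$ is the detour of length $d_P(i,k)+|v_iv_j|+d_P(l,j)$, and since $|v_iv_{j+1}|+|v_{j+1}v_j|\ge|v_iv_j|$ this detour can only get \emph{longer} when $j$ is replaced by $j+1$; hence $d_{G(i,j+1)}(v_k,v_l)\ge d_{G(i,j)}(v_k,v_l)$, and together with the enlarged index range this yields $\gamma(i,j)\le\gamma(i,j+1)$. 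The same sign error appears in your $\alpha$ paragraph: you correctly identify that you need $d_{G(i,j+1)}(v_1,v_k)\ge d_{G(i,j)}(v_1,v_k)$, but then justify it by saying shortcuts available in $G(i,j)$ are ``matched or beaten'' in $G(i,j+1)$, which would prove the reverse inequality.

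The $\beta$ case is also not safely dismissed as ``symmetric.'' For $\beta(i,j)\ge\beta(i,j+1)$ the pointwise comparison does go the right way ($d_{G(i,j+1)}(v_k,v_n)\le d_{G(i,j)}(v_k,v_n)$ for $k\in[i,j]$, because $d_P(i,k)+|v_iv_{j+1}|+d_P(j+1,n)\le d_P(i,k)+|v_iv_j|+|v_jv_{j+1}|+d_P(j+1,n)$), but the maximum defining $\beta(i,j+1)$ ranges over the \emph{larger} set $[i,j+1]$ while the value must \emph{decrease}, so the two effects oppose each other and you must separately bound the new term $k=j+1$. That step is easy ($d_{G(i,j+1)}(v_{j+1},v_n)\le d_P(j+1,n)\le d_P(j,n)=d_{G(i,j)}(v_j,v_n)\le\beta(i,j)$) but your outline omits it. The clean fix for the whole observation is to write the closed forms, e.g.\ $d_{G(i,j)}(v_1,v_k)=\min\{d_P(1,k),\,d_P(1,i)+|v_iv_j|+d_P(k,j)\}$ for $k\in[i,j]$ and its three analogues, and reduce every one of the eight inequalities to one of the two triangle inequalities $|v_iv_{j+1}|\le|v_iv_j|+|v_jv_{j+1}|$ and $|v_iv_j|\le|v_iv_{j+1}|+|v_{j+1}v_j|$ (and symmetrically at the $i$ end for Part~2); the walk-replacement case analysis you sketch can be made to work, but, as your own hedging shows, it is precisely where the signs get lost.
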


For any pair $(i,j)$ with $1\leq i\leq j\leq n$,
let $P(i,j)$ denote the subpath of $P$ between $v_i$ and $v_j$. Hence,
$d_P(v_i,v_j)$ is the length of $P(i,j)$, i.e.,
$d_P(v_i,v_j)=\sum_{i\leq k\leq j-1}|v_kv_{k+1}|$ if $i<j$ and $d_P(v_i,v_j)=0$ if
$i=j$.

In our algorithms, we will need to compute $f(i,j)$ for each $f\in
\{\alpha,\beta,\gamma,\delta\}$. The next observation gives an algorithm.
The result was also given by Gro{\ss}e et al.~\cite{ref:GrobeFa15} and we present the
proof here for completeness of this paper.

\begin{lemma}\label{lem:preprocess}{\em(\cite{ref:GrobeFa15})}
With $O(n)$ time preprocessing, given any pair $(i,j)$ with $1\leq
i\leq j\leq n$, we can compute $d_P(i,j)$ and $\delta(i,j)$ in $O(1)$
time, and compute $\alpha(i,j)$ and $\beta(i,j)$ in $O(\log n)$ time.
\end{lemma}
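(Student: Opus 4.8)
The key observation is that in $G(i,j)$, the shortest path from $v_1$ to any vertex $v_k$ with $k \in [i,j]$ either goes along the prefix path $P(1,k)$ directly, or it goes along $P(1,i)$, takes the new edge $e(v_i,v_j)$, and then walks backward along $P(k,j)$. So $d_{G(i,j)}(v_1,v_k) = \min\{d_P(v_1,v_k),\ d_P(v_1,v_i) + |v_iv_j| + d_P(v_k,v_j)\}$, and similarly $d_{G(i,j)}(v_k,v_n) = \min\{d_P(v_k,v_n),\ d_P(v_i,v_k) + |v_iv_j| + d_P(v_j,v_n)\}$, while $\delta(i,j) = \min\{d_P(v_1,v_n),\ d_P(v_1,v_i) + |v_iv_j| + d_P(v_j,v_n)\}$. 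One must argue these are the only relevant path types (no shortest path uses the new edge more than once, and doesn't backtrack on $P$ beyond what's listed), which follows from non-negativity of edge lengths and the metric structure.

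First I would set up the preprocessing: compute the prefix sums $s_k = d_P(v_1,v_k) = \sum_{1 \le t \le k-1} |v_tv_{t+1}|$ for all $k$ in $O(n)$ time. Then $d_P(v_i,v_j) = s_j - s_i$ is available in $O(1)$ time, and the two-term formula above gives $\delta(i,j)$ in $O(1)$ time as well. This handles the first two claims directly.

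For $\alpha(i,j) = \max_{i \le k \le j} d_{G(i,j)}(v_1,v_k)$, I substitute the formula: $\alpha(i,j) = \max_{i \le k \le j} \min\{s_k,\ s_i + |v_iv_j| + (s_j - s_k)\}$. The inner function of $k$ is the pointwise minimum of an increasing function ($s_k$) and a decreasing function ($s_i + |v_iv_j| + s_j - s_k$), so the maximum over $k \in [i,j]$ occurs near their crossover point. The plan is to binary search over $k \in [i,j]$ for the index where $s_k$ overtakes the decreasing term, evaluate the two candidate values at that crossover (the largest $k$ with $s_k \le$ the decreasing term, and its successor), and take the max; this costs $O(\log n)$ time. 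The symmetric argument handles $\beta(i,j)$: write $\beta(i,j) = \max_{i \le k \le j} \min\{s_n - s_k,\ (s_k - s_i) + |v_iv_j| + (s_n - s_j)\}$, again a max of a min of a decreasing and an increasing function in $k$, solved by the same binary search in $O(\log n)$ time.

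The main obstacle is not the computation but the correctness of the two-term shortest-path formulas — one must carefully justify that no shortest path in $G(i,j)$ revisits a vertex or uses the edge $e(v_i,v_j)$ twice (which is immediate since lengths are non-negative, so a simple shortest path exists), and that for a vertex $v_k$ inside $[i,j]$ every simple $v_1$-$v_k$ path is one of the two listed forms. Once the formula is established, the "max of an increasing and a decreasing clipped function" structure makes the $O(\log n)$ evaluation routine; one only needs to be slightly careful that the crossover binary search inspects both the last index below the crossover and the first index at or above it, since the discrete maximum can be attained on either side.
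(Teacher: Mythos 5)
Your proposal is correct and follows essentially the same route as the paper: prefix sums for $d_P$ and the two-term formula for $\delta$, and a binary search exploiting the unimodality of $d_{G(i,j)}(v_1,v_k)$ in $k$ for $\alpha$ (symmetrically for $\beta$). Your explicit writing of $d_{G(i,j)}(v_1,v_k)$ as the pointwise minimum of an increasing and a decreasing function of $k$, with the binary search locating the crossover, is just a more detailed justification of the unimodality the paper asserts.
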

\begin{proof}
As preprocessing,
we compute the prefix sum array $A[1\cdots n]$ such
that $A[k]=\sum_{1\leq l\leq k-1}|v_lv_{l+1}|$ for each $k\in [2,n]$ and
$A[1]=0$. This can be done in $O(n)$ time. This finishes our preprocessing.

Consider any pair $(i,j)$ with $1\leq i\leq j\leq n$.
Note that $d_P(v_i,v_j)=A[j]-A[i]$, which can be computed in constant time.

For $\delta(i,j)$, it is easy to see that $\delta(i,j)=\min\{d_P(1,n),d_P(1,i)+|v_iv_j|+d_P(j,n)\}$.
Hence, $\delta(i,j)$ can be computed in constant time.

For $\alpha(i,j)$, if we consider $d_{G(i,j)}(v_1,v_k)$ as a function of $k\in [i,j]$, then $d_{G(i,j)}(v_1,v_k)$ is a unimodal function. More specifically, as $k$ changes from $i$ to $j$, $d_{G(i,j)}(v_1,v_k)$ first increases and then decreases. Hence, $\alpha(i,j)$ can be computed in $O(\log n)$ time by binary search on the
sequence $v_i,v_{i+1},\ldots,v_j$.

Computing $\beta(i,j)$ can be also done in $O(\log n)$ time in a similar way to $\alpha(i,j)$. We omit the details.
\end{proof}

For computing $\gamma(i,j)$, although one may be able to do so in $O(n)$ time, it is not clear to us how to make it in $O(\log n)$ time even with $O(n\log n)$ time preprocessing. As will be seen later, this is the major difficulty for solving the problem DOAP efficiently. We refer to it as the {\em $\gamma$-computation difficulty}. Our main effort will be to circumvent the difficulty by providing alternative and efficient solutions.

For any pair $(i,j)$ with $1\leq i\leq j\leq n$, we use $C(i,j)$ to
denote the cycle $P(i,j)\cup e(v_i,v_j)$.
Consider $d_{G(i,j)}(v_k,v_l)$ for any $k$ and $l$ with $i\leq k\leq l\leq j$. Notice that the shortest path from $v_k$ to $v_l$ in $C(i,j)$ is also a shortest path in $G(i,j)$. Hence, $d_{G(i,j)}(v_k,v_l)=d_{C(i,j)}(v_k,v_l)$. There are two paths in $C(i,j)$ from $v_k$ to $v_l$: one is $P(k,l)$ and the other uses the edge $e(v_i,v_j)$. We use $d^1_{C(i,j)}(v_k,v_l)$ to denote the length of the above second path, i.e., $d^1_{C(i,j)}(v_k,v_l)=d_P(v_i,v_k)+|v_iv_j|+d_P(v_l,v_j)$.
With these notation, we have $d_{C(i,j)}(v_k,v_l)=\min\{d_P(v_k,v_l),d^1_{C(i,j)}(v_k,v_l)\}$. According to the definition of $\gamma(i,j)$, we summarize our discussion in the following observation.

\begin{observation}\label{obser:gamma}
For any pair $(i,j)$ with $1\leq i\leq j\leq n$,
$\gamma(i,j)=\max_{i\leq k\leq l\leq j}d_{C(i,j)}(v_k,v_l)$, with
$d_{C(i,j)}(v_k,v_l)=\min\{d_P(v_k,v_l),d^1_{C(i,j)}(v_k,v_l)\}$ and
$d^1_{C(i,j)}(v_k,v_l)=d_P(v_i,v_k)+|v_iv_j|+d_P(v_l,v_j)$.
\end{observation}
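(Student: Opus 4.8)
The plan is to make rigorous the informal discussion immediately preceding the statement; the single substantive claim is the identity $d_{G(i,j)}(v_k,v_l)=d_{C(i,j)}(v_k,v_l)$ for all $k,l$ with $i\le k\le l\le j$, and everything else is bookkeeping. To establish this identity I would first describe the structure of $G(i,j)$: it is the cycle $C(i,j)=P(i,j)\cup e(v_i,v_j)$ together with two pendant subpaths, namely $P(1,i)$, which meets $C(i,j)$ only at $v_i$, and $P(j,n)$, which meets $C(i,j)$ only at $v_j$. Since both $v_k$ and $v_l$ lie on $C(i,j)$, I would take any shortest $v_k$-$v_l$ path $\pi$ in $G(i,j)$ and argue that it may be assumed to lie inside $C(i,j)$: each maximal portion of $\pi$ outside the cycle is a sub-walk that leaves and re-enters the cycle through the same cut vertex ($v_i$ or $v_j$, since $P(1,i)$ and $P(j,n)$ are trees attached at a single vertex), so deleting that sub-walk yields a $v_k$-$v_l$ walk contained in $C(i,j)$ whose length is no larger, using only the nonnegativity of edge lengths. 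Hence $d_{G(i,j)}(v_k,v_l)=d_{C(i,j)}(v_k,v_l)$.

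Next I would enumerate the (at most) two simple $v_k$-$v_l$ paths inside the cycle $C(i,j)$. One is the subpath $P(k,l)$, of length $d_P(v_k,v_l)$. The other follows $P(i,k)$ from $v_k$ to $v_i$, then the edge $e(v_i,v_j)$, then $P(l,j)$ from $v_j$ to $v_l$, and thus has length $d_P(v_i,v_k)+|v_iv_j|+d_P(v_l,v_j)$, which is exactly $d^1_{C(i,j)}(v_k,v_l)$. Since the distance between two vertices of a cycle is the smaller of the lengths of the two arcs joining them, $d_{C(i,j)}(v_k,v_l)=\min\{d_P(v_k,v_l),d^1_{C(i,j)}(v_k,v_l)\}$. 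Substituting this into the definition $\gamma(i,j)=\max_{i\le k\le l\le j}d_{G(i,j)}(v_k,v_l)$ gives the stated formula; the degenerate situations $i=j$, $j=i+1$, and $k=l$ need no separate treatment since both expressions reduce to $0$ or to a single path length.

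I do not expect a genuine obstacle here. The only point that needs mild care is the detour-removal step, which must be phrased at the level of walks rather than of simple paths: a shortest path in a graph with nonnegative edge lengths is simple, but the sub-walk that is excised need not be. This is entirely routine and relies solely on nonnegativity of the lengths together with the fact that each pendant piece attaches to the cycle at exactly one vertex.
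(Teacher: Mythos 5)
Your proposal is correct and follows the same route as the paper, which justifies this observation only by the informal discussion preceding it (the key assertion being that a shortest $v_k$-$v_l$ path in $C(i,j)$ is also shortest in $G(i,j)$, since $G(i,j)$ is the cycle with the two pendant subpaths $P(1,i)$ and $P(j,n)$ attached at the cut vertices $v_i$ and $v_j$). Your detour-removal argument simply makes that assertion rigorous; the rest matches the paper's enumeration of the two arcs of the cycle.
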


In the following,
to simplify the notation, when the context is clear, we use index $i$
to refer to vertex $v_i$. For example, $d_P(i,j)$ refers to
$d_P(v_i,v_j)$ and $e(i,j)$ refers to $e(v_i,v_j)$.

\section{The Decision Problem}
\label{sec:decision}
In this section, we present our $O(n)$ time algorithm for solving the
decision problem. For any value $\lambda$, our goal is to determine
whether $\lambda$ is feasible, i.e. whether $\lambda\geq \lambda^*$, or equivalently,
whether there is a pair $(i,j)$ with $1\leq i\leq j\leq n$
such that $D(i,j)\leq \lambda$. If yes, our algorithm can also find
such a {\em feasible edge} $e(i,j)$.

By Observation~\ref{obser:10}, $D(i,j)\leq \lambda$ holds if and only if
$f(i,j)\leq \lambda$ for each $f\in \{\alpha,\beta,\gamma,\delta\}$.
To determine whether $\lambda$ is feasible, our algorithm will determine for each
$i\in [1,n]$, whether there exists $j\in [i,n]$ such that
$f(i,j)\leq \lambda$ for each $f\in \{\alpha,\beta,\gamma,\delta\}$.

For any fixed $i\in [1,n]$, we consider $\alpha(i,j)$, $\beta(i,j)$, $\gamma(i,j)$,
and $\delta(i,j)$ as functions of $j\in [i,n]$.
In light of Observation~\ref{obser:20}, $\alpha(i,j)$ and $\gamma(i,j)$ are
monotonically increasing and $\beta(i,j)$ and $\delta(i,j)$ are
monotonically decreasing (e.g., see Fig.~\ref{fig:monotone}).
We define four indices $I_i(f)$ for
$f\in \{\alpha,\beta,\gamma,\delta\}$ as follows. Refer to Fig.~\ref{fig:monotone}.

\begin{figure}[t]
\begin{minipage}[t]{\textwidth}
\begin{center}
\includegraphics[height=1.2in]{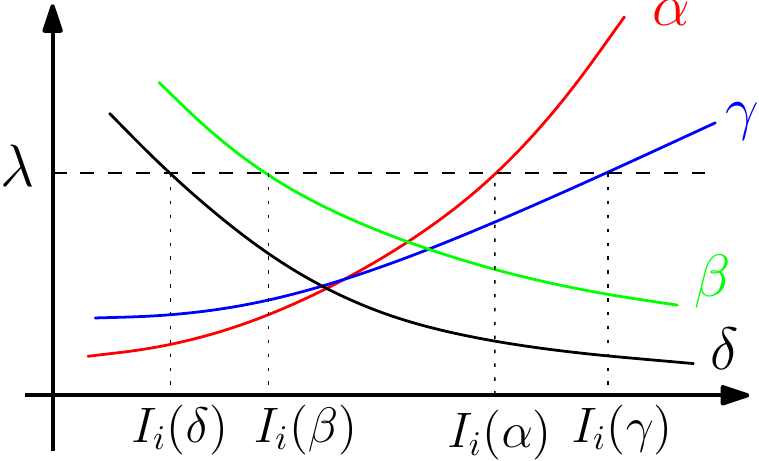}
\caption{\footnotesize Illustrating $f(i,j)$ as $j$ changes in $[i,n]$ and $I_i(f)$ for $f\in \{\alpha,\beta,\gamma,\delta\}$.}
\label{fig:monotone}
\end{center}
\end{minipage}
\end{figure}

\begin{definition}
Define $I_i(\alpha)$ to be the
largest index $j\in [i,n]$ such that $\alpha(i,j)\leq \lambda$.
We define $I_i(\gamma)$ similarly to $I_i(\alpha)$.
If $\beta(i,n)\leq \lambda$, then define $I_i(\beta)$ to be the
smallest index $j\in [i,n]$ such that $\beta(i,j)\leq \lambda$;
otherwise, let $I_i(\beta)=\infty$.  We define $I_i(\delta)$ similarly to $I_i(\beta)$.
\end{definition}


As discussed in \cite{ref:GrobeFa15}, $\lambda$ is feasible if and only
if $[1,I_i(\alpha)]\cap [I_i(\beta),n]\cap [1,I_i(\gamma)]\cap
[I_i(\delta),n]\neq \emptyset$ for some $i\in [1,n]$.
By Observation~\ref{obser:20}, we have the following lemma.

\begin{lemma}\label{lem:10}
For any $i\in [1,n-1]$, $I_i(\alpha)\geq I_{i+1}(\alpha)$,
$I_i(\beta)\geq I_{i+1}(\beta)$,
$I_i(\gamma)\leq I_{i+1}(\gamma)$, and $I_i(\delta)\leq I_{i+1}(\delta)$  (e.g., see Fig.~\ref{fig:shift}).
\end{lemma}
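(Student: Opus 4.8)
The plan is to derive all four inequalities of Lemma~\ref{lem:10} from Observation~\ref{obser:20}(2) --- which records how each $f(i,j)$ moves when the left endpoint is incremented --- together with the extremal definitions of the indices $I_i(f)$. Throughout I would first dispose of the degenerate cases: for $\beta$ and $\delta$, the case where one of $I_i(f)$, $I_{i+1}(f)$ equals $\infty$; for $\alpha$ and $\gamma$, the case where no $j\in[i,n]$ satisfies the relevant constraint (the feasible window is then empty and there is nothing to prove). After that only the main comparison for general indices remains.

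For $\alpha$, set $j:=I_{i+1}(\alpha)$; since $j\ge i+1>i$, Observation~\ref{obser:20}(2) gives $\alpha(i,j)\le\alpha(i+1,j)\le\lambda$, and $j\in[i+1,n]\subseteq[i,n]$, so $j$ is a candidate for $I_i(\alpha)$ and thus $I_i(\alpha)\ge j=I_{i+1}(\alpha)$. For $\gamma$, set $j:=I_i(\gamma)$; if $j=i$ then $j<i+1\le I_{i+1}(\gamma)$ trivially (using $\gamma(i+1,i+1)=0\le\lambda$), and if $j\ge i+1$ then $\gamma(i+1,j)\le\gamma(i,j)\le\lambda$ by Observation~\ref{obser:20}(2) with $j\in[i+1,n]$, so $I_{i+1}(\gamma)\ge j=I_i(\gamma)$.

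For $\delta$, if $I_i(\delta)=\infty$ then $\delta(i,n)>\lambda$, hence $\delta(i+1,n)\ge\delta(i,n)>\lambda$ and $I_{i+1}(\delta)=\infty$ as well; otherwise let $m:=I_i(\delta)$, and for every $j$ with $i+1\le j\le m-1$ minimality of $m$ gives $\delta(i,j)>\lambda$, whence $\delta(i+1,j)\ge\delta(i,j)>\lambda$, so the smallest feasible index for $i+1$ is at least $m$, i.e.\ $I_{i+1}(\delta)\ge m=I_i(\delta)$. For $\beta$ I would run the mirror argument with $i$ and $i+1$ interchanged: the $\infty$-case uses $\beta(i,n)\ge\beta(i+1,n)$, and in the finite case, writing $m:=I_{i+1}(\beta)$, one obtains $\beta(i,j)\ge\beta(i+1,j)>\lambda$ for all $j$ with $i+1\le j\le m-1$, which handles exactly those $j$.

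The $\beta$ case is where I expect the real difficulty to lie. Its feasible-window endpoint $j=i$ belongs to the range $[i,m-1]$ that matters for index $i$ but not to the range that matters for index $i+1$, so the mirror argument above does not by itself cover $j=i$; one has to control $\beta(i,i)=d_P(v_i,v_n)$ separately, both against $\lambda$ and against $I_{i+1}(\beta)$. (The analogous endpoint is harmless for $\delta$ precisely because $\delta(i,i)=d_P(v_1,v_n)$ does not depend on $i$, whereas $\beta(i,i)$ does.) Pinning down this boundary behaviour and reconciling it with the asserted direction of monotonicity is the step I would give the most attention; the $\infty$-conventions and the remaining trivial sub-cases are routine bookkeeping.
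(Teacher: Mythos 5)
Your strategy is exactly the paper's: the published proof is a one-line appeal to Observation~\ref{obser:20} plus ``the other three cases are similar'', and your arguments for $\alpha$, $\gamma$ and $\delta$ supply the omitted details correctly --- in particular the care that part (2) of Observation~\ref{obser:20} only compares $f(i,j)$ with $f(i+1,j)$ for $j>i$, so the endpoint $j=i$ of the domain $[i,n]$ must be treated separately, and that for $\delta$ this endpoint is harmless because the claimed inequality $I_i(\delta)\le I_{i+1}(\delta)$ points the same way as the shift of the domain from $[i,n]$ to $[i+1,n]$.

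Your worry about $\beta$ is not leftover bookkeeping; it is a genuine counterexample to the lemma as stated, so no ``reconciling'' will close it. If $d_P(v_i,v_n)\le\lambda$, then $\beta(i,i)=d_P(v_i,v_n)\le\lambda$ and hence $I_i(\beta)=i$ (the smallest index of its domain), while $\beta(i+1,i+1)=d_P(v_{i+1},v_n)\le d_P(v_i,v_n)\le\lambda$ forces $I_{i+1}(\beta)=i+1$; thus $I_i(\beta)=i<i+1=I_{i+1}(\beta)$, contradicting the asserted $I_i(\beta)\ge I_{i+1}(\beta)$. What is true --- and what your mirror argument together with the $\infty$-case actually proves --- is $I_{i+1}(\beta)\le\max\{I_i(\beta),\,i+1\}$, i.e.\ the stated inequality holds whenever $I_i(\beta)>i$. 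This weaker form is all the paper ever uses: the two-pointer computation in Lemma~\ref{lem:20} still runs in $O(n)$ time since the pointer only ever has to advance to the one forced position $i+1$, and the feasibility test is unaffected because when $d_P(v_i,v_n)\le\lambda$ the constraint $j\ge I_i(\beta)$ is vacuous (this is exactly case 1 in the proof of Lemma~\ref{lem:20}). So three of your four cases are complete and coincide with the paper's intended argument; for $\beta$ you correctly located the defect, and the resolution is to weaken the statement at the boundary rather than to prove it as written.
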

\begin{proof}
According to Observation~\ref{obser:20}, $\alpha(i,j)\leq \alpha(i+1,j)$. This implies that $I_i(\alpha)\geq I_{i+1}(\alpha)$ by the their definitions (e.g., see Fig.~\ref{fig:shift}).
The other three cases for $\beta$, $\gamma$, and $\delta$ are similar.
\end{proof}

\begin{figure}[h]
\begin{minipage}[t]{\textwidth}
\begin{center}
\includegraphics[height=1.2in]{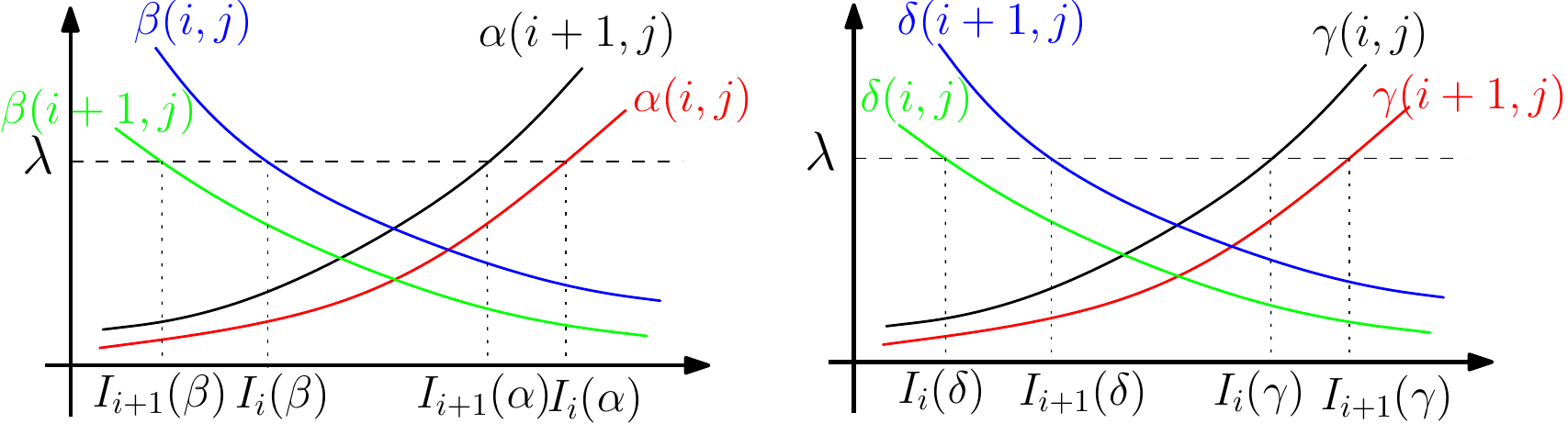}
\caption{\footnotesize Illustrating $f(i,j)$ and $f(i+1,j)$ as $j$ changes and $I_i(f)$ and $I_{i+1}(f)$ for $f\in \{\alpha,\beta,\gamma,\delta\}$.}
\label{fig:shift}
\end{center}
\end{minipage}
\vspace{-0.15in}
\end{figure}

\subsection{Computing $I_i(\alpha)$, $I_i(\beta)$, and $I_i(\delta)$
for all $i\in [1,n]$}
\label{sec:computeI}

In light of Lemma~\ref{lem:10},
for each $f\in \{\alpha,\beta,\delta\}$, we compute $I_i(f)$ for all
$i=1,2,\ldots,n$ in $O(n)$ time, as follows.

We discuss the case for $\delta$ first.
According to Lemma~\ref{lem:preprocess}, $\delta(i,j)$
can be computed in constant time for any pair $(i,j)$ with $1\leq
i\leq n$. We can compute $I_i(\delta)$ for
all $i\in [1,n]$ in $O(n)$ time by the following simple algorithm. We first compute
$I_1(\delta)$, which is done by computing $\delta(1,j)$ from $j=1$
incrementally until the first time $\delta(1,j)\leq \lambda$. Then, to
compute $I_2(\delta)$, we compute $\delta(2,j)$ from $j=I_1(\delta)$
incrementally until the first time $\delta(2,j)\leq \lambda$. Next, we
compute $I_i(\delta)$ for $i=3,4,\ldots,n$ in the same way. The total
time is $O(n)$. The correctness is based on the monotonicity property
of $I_i(\delta)$ in Lemma~\ref{lem:10}.

To compute $I_i(\alpha)$ or $I_i(\beta)$ for $i=1,2,\ldots,n$, using a similar
approach as above, we can only have an $O(n\log n)$ time algorithm
since computing each
$\alpha(i,j)$ or  $\beta(i,j)$  takes $O(\log n)$ time by
Lemma~\ref{lem:preprocess}. In the following Lemma~\ref{lem:20}, we give another approach
that only needs $O(n)$ time.

\begin{lemma}\label{lem:20}
$I_i(\alpha)$ and $I_i(\beta)$ for all $i=1,2,\ldots,n$ can be computed in $O(n)$ time.
\end{lemma}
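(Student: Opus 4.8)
The plan is to compute $\alpha(i,j)$ (and symmetrically $\beta(i,j)$) incrementally along the path so that, instead of invoking the $O(\log n)$ binary search of Lemma~\ref{lem:preprocess} each time, we spend only $O(1)$ amortized time per evaluation. Fix $i$ and recall that $d_{G(i,j)}(v_1,v_k)$ as a function of $k\in[i,j]$ is unimodal: it increases and then decreases, with the decreasing part being $d_P(1,i)+|v_iv_j|+d_P(v_k,v_j)$ for $k$ beyond the ridge. Hence $\alpha(i,j)=\max_{i\le k\le j}d_{G(i,j)}(v_1,v_k)$ is attained either at the ridge vertex or is dominated by the two endpoints' values; more usefully, $\alpha(i,j)=\max\{\,d_P(1,k^*),\ d_P(1,i)+|v_iv_j|+d_P(j,k^*)\,\}$ where $k^*$ is a breakpoint that can be maintained with a pointer. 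The key monotonicity is that as $j$ increases (with $i$ fixed), this breakpoint moves monotonically, and as $i$ increases it also moves monotonically; this lets us amortize its maintenance.

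Concretely, I would process $i$ from $1$ to $n$ and, for each $i$, advance $j$ from $I_{i-1}(\alpha)$ upward (exactly as in the $\delta$ algorithm above, which is legitimate by Lemma~\ref{lem:10} since $I_i(\alpha)\ge I_{i+1}(\alpha)$ — note for $\alpha$ the indices \emph{decrease} in $i$, so the pointer walks in the opposite direction, but the total displacement is still $O(n)$). The first step is to make precise the formula $\alpha(i,j)=\max\{g_1(i,j),g_2(i,j)\}$ where $g_1$ is the max of $d_{G(i,j)}(v_1,v_k)$ over $k$ on the increasing branch (which does not depend on $j$ at all once we know it, since that branch is just the prefix path $v_1,\dots,v_i,\dots$) and $g_2$ is the max over the decreasing branch. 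The second step is to show each of $g_1,g_2$ is evaluable in $O(1)$ time given a correctly positioned branch-pointer, using the prefix-sum array $A$ from Lemma~\ref{lem:preprocess} plus one more precomputed array for the distances $d_{G(i,j)}(v_1,v_k)$ restricted to the part of the path that lies "before" the shortcut. The third step is the amortization argument: as the pair $(i,j)$ sweeps, the branch-pointer is monotone, so its total movement across all $(i,j)$ encountered is $O(n)$, giving $O(n)$ overall.

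The main obstacle is handling the branch (ridge) vertex cleanly. For a fixed $i$, as $j$ grows the shortcut $e(i,j)$ gets longer, so the detour through it becomes less attractive and the ridge $k^*$ moves toward $j$; simultaneously, when $i$ itself changes the structure of the "increasing branch" changes because the subpath $P(1,i)$ is part of it. I expect the delicate point is to verify that a single monotone pointer (or a small constant number of them) suffices to track $k^*$ through \emph{both} kinds of moves without ever having to back up more than $O(1)$ amortized — this is where I would lean on Observation~\ref{obser:20}'s monotonicity of $\alpha$ in both arguments to force the pointer's monotonicity. Once that is established, bounding the work is routine: constant work per pointer advance, $O(n)$ advances total, plus $O(n)$ outer iterations, plus the $O(n)$ preprocessing, for $O(n)$ in all; the argument for $\beta$ is the mirror image (swap the roles of $v_1$ and $v_n$, reverse the path), so I would just state it by symmetry.
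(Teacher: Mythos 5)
There is a genuine gap at exactly the point you flag as ``the delicate point'': the monotonicity of the ridge pointer $k^*$ does not hold under the sweep your algorithm performs, and Observation~\ref{obser:20} does not rescue it. For $\alpha$ the sweep increases $i$ while $j$ only decreases (since $I_i(\alpha)\geq I_{i+1}(\alpha)$). The ridge is where the path branch $d_P(1,k)$ crosses the detour branch $d_P(1,i)+|v_iv_j|+d_P(k,j)$. By the triangle inequality the detour branch (at fixed $k$) is non-decreasing both when $j$ increases and when $i$ increases; hence incrementing $i$ pushes $k^*$ to the \emph{right}, while decrementing $j$ pushes it to the \emph{left}. The two kinds of moves alternate throughout the sweep, so $k^*$ is not monotone, and a single move can displace it by $\Theta(n)$; nothing in your argument bounds the total displacement by $O(n)$. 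Monotonicity of the \emph{values} $\alpha(i,j)$ in $i$ and $j$ (Observation~\ref{obser:20}) says nothing about where the maximum is attained, so it cannot be used to force the pointer to be monotone. As written, the amortization — the heart of the proof — is unestablished, and the worst case of your scheme is $O(n^2)$ unless you supply a genuinely new idea.

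The paper avoids this entirely by observing that one never needs the value $\beta(i,j)$ (or $\alpha(i,j)$), only the predicate $\beta(i,j)\leq\lambda$, and that this predicate has a closed form: precompute the single threshold index $k$ (smallest with $d_P(k,n)\leq\lambda$); then for $i<k\leq j$, every vertex $v_m$ with $m\in[i,k-1]$ must use the new edge, the worst one is $v_{k-1}$, and the test reduces to $d_P(i,k-1)+|v_iv_j|+d_P(j,n)\leq\lambda$, an $O(1)$ worst-case computation with no pointer to maintain. If you want to salvage your approach, the fix is to replace the exact evaluation of $\alpha(i,j)$ by this kind of $O(1)$ threshold test rather than trying to track the ridge.
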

\begin{proof}
We only discuss the case for $\beta$ since the other case for $\alpha$ is
analogous.

The key idea is that for each pair $(i,j)$, instead of
computing the exact value of $\beta(i,j)$, it is sufficient to
determine whether $\beta(i,j)\leq \lambda$.
In what follows, we show that with $O(n)$ time preprocessing, we
can determine whether $\beta(i,j)\leq \lambda$ in $O(1)$ time for any
index pair $(i,j)$ with $1\leq i\leq j\leq n$.

Let $k$ be the smallest index in $[1,n]$ such that $d_P(k,n)\leq \lambda$.
This implies that $d_P(k-1,n)>\lambda$ if $k>1$.
As preprocessing, we compute the index $k$, which can be easily done in
$O(n)$ time (or even in $O(\log n)$ time by binary search).

Consider any pair $(i,j)$ for $1\leq i\leq j\leq n$.
Our goal is to determine whether $\beta(i,j)\leq
\lambda$.
\begin{enumerate}
\item
If $k\leq i$, then it is vacuously true that
$\beta(i,j)\leq \lambda$.
\item
If $k>j$, then $\beta(i,j)>
\lambda$.
\item
If $i<k\leq j$, a crucial observation is that
$\beta(i,j)\leq \lambda$ if and only if the
length of the path from $v_n$ to $v_{k-1}$ using the new
edge $e(i,j)$, i.e.,
$d_P(i,k-1)+|v_iv_j|+d_P(j,n)$, is less than
or equal to $\lambda$. See Fig.~\ref{fig:computeIbeta}.
Clearly, the above path length can be computed in
constant time, and thus, we can determine whether $\beta(i,j)\leq \lambda$ in
constant time.
\end{enumerate}

\begin{figure}[t]
\begin{minipage}[t]{\textwidth}
\begin{center}
\includegraphics[height=1.2in]{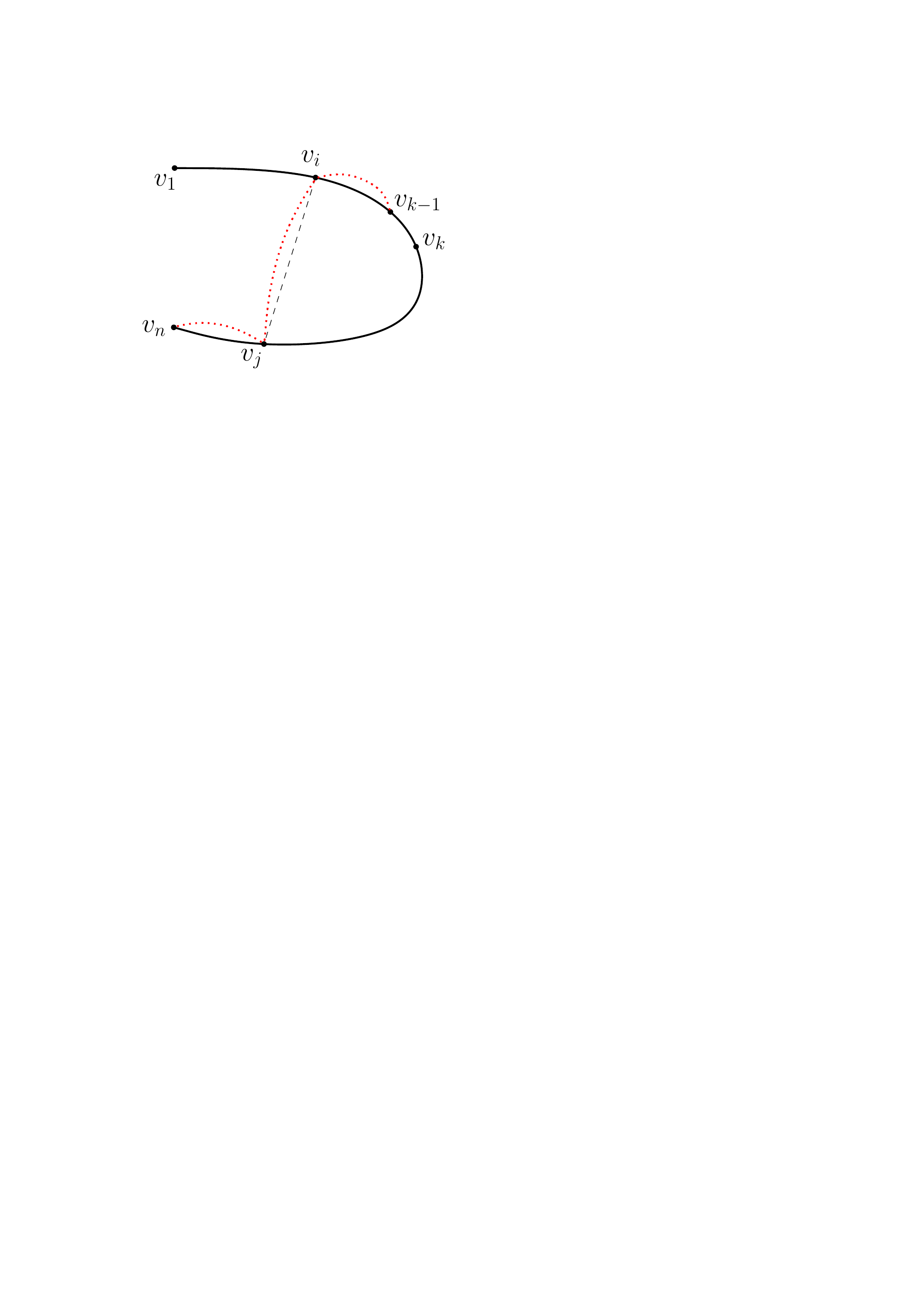}
\caption{\footnotesize Illustrating the path (the dotted curve) from $v_n$ to $v_{k-1}$
using the edge $e(i,j)$.}
\label{fig:computeIbeta}
\end{center}
\end{minipage}
\vspace{-0.15in}
\end{figure}

Therefore, we can determine whether $\beta(i,j)\leq \lambda$ in
constant time for any pair $(i,j)$ with $1\leq i\leq j\leq n$.
With this result, we can use a similar algorithm as the above for
computing $I_i(\delta)$ to compute $I_i(\beta)$ for all $i\in [1,n]$
in $O(n)$ time. The lemma thus follows.
\end{proof}

Due to the $\gamma$-computation difficulty mentioned in Section~\ref{sec:pre},
it is not clear whether it possible to compute $I_i(\gamma)$ for all $i=1,\ldots,n$ in
$O(n\log n)$ time.

Recall that $\lambda$ is feasible if and only if there exists an $i\in
[1,n]$ such that
$[1,I_i(\alpha)]\cap [I_i(\beta),n]\cap [1,I_i(\gamma)]\cap
[I_i(\delta),n]\neq \emptyset$.
Now that $I_i(f)$ for all $i=1,2,\ldots,n$ and
$f\in \{\alpha,\beta,\delta\}$ have been computed but the $I_i(\gamma)$'s are
not known, in the following we will use an ``indirect'' approach to determine
whether the intersection of the above four intervals is empty for
every $i\in [1,n]$.

\subsection{Determining the Feasibility of $\lambda$}

For each $i\in [1,n]$,
define $Q_i=[1,I_i(\alpha)]\cap [I_i(\beta),n]\cap [1,I_i(\gamma)]\cap [I_i(\delta),n]$.
Our goal is to determine whether $Q_i$ is empty for each
$i=1,2,\ldots,n$.

Consider any $i\in [1,n]$. Since $I_i(f)$ for each $f\in
\{\alpha,\beta,\delta\}$ is known, we can determine the intersection
$[1,I_i(\alpha)]\cap [I_i(\beta),n]\cap [I_i(\delta),n]$ in constant
time.
If the intersection is empty, then we know that $Q_i=\emptyset$. In the
following, we assume the intersection is not empty.

Let $a_i$ be the
smallest index in the above intersection. As in \cite{ref:GrobeFa15},
an easy observation is that $Q_i\neq \emptyset$ if and only if $a_i\in
[1,I_i(\gamma)]$. If $a_i\leq i$ (note that $a_i\leq i$ actually implies $a_i=i$ since $a_i\geq I_i(\beta)\geq i$), it is obviously true that $a_i\in
[1,I_i(\gamma)]$ since $i\leq I_i(\gamma)$. Otherwise (i.e., $i<a_i$), according to
the definition of $I_i(\gamma)$, $a_i\in
[1,I_i(\gamma)]$ if and only if $\gamma(i,a_i)\leq \lambda$.
Gro{\ss}e et al.~\cite{ref:GrobeFa15} gave an approach that can
determine whether $\gamma(i,a_i)\leq \lambda$ in $O(\log n)$ time after
$O(n\log n)$ time preprocessing.
In the following, by new observations and with the help of the range
minima data structure \cite{ref:BenderTh00,ref:HarelFa84}, we show that
whether $\gamma(i,a_i)\leq \lambda$  can be determined in constant time after
$O(n)$ time preprocessing.

For each $j\in [1,n]$, define $g_j$ as the largest index $k$ in
$[j,n]$ such that $d_P(j,k)\leq \lambda$. Observe that $g_1\leq
g_2\leq\cdots \leq g_n$.

Consider any $i$ and the corresponding $a_i$ with $i<a_i$.
Our goal is to determine whether $\gamma(i,a_i)\leq \lambda$. Since we
are talking about $\gamma(i,a_i)$, we are essentially considering the
graph $G(i,a_i)$. Recall that $C(i,a_i)$ is the cycle $P(i,a_i)\cup
e(i,a_i)$. By Observation~\ref{obser:gamma},
$\gamma(i,a_i)=\max_{i\leq k\leq l\leq a_i}d_{C(i,a_i)}(k,l)$, and further,
$d_{C(i,j)}(k,l)=\min\{d_P(k,l),d^1_{C(i,a_i)}(k,l)\}$
and $d^1_{C(i,a_i)}(k,l)=d_P(i,k)+|v_iv_{a_i}|+d_P(l,a_i)$.

For any $j\in [i,a_i-1]$, if $g_j\leq a_i-1$, then vertex $g_j+1$ is in
the cycle $C(i,a_i)$. Note that
$d^1_{C(i,a_i)}(j,g_j+1)=d_P(i,j)+|v_iv_{a_i}|+d_P(g_j+1,a_i)$. See
Fig.~\ref{fig:gj}.

\begin{figure}[t]
\begin{minipage}[t]{\textwidth}
\begin{center}
\includegraphics[height=1.0in]{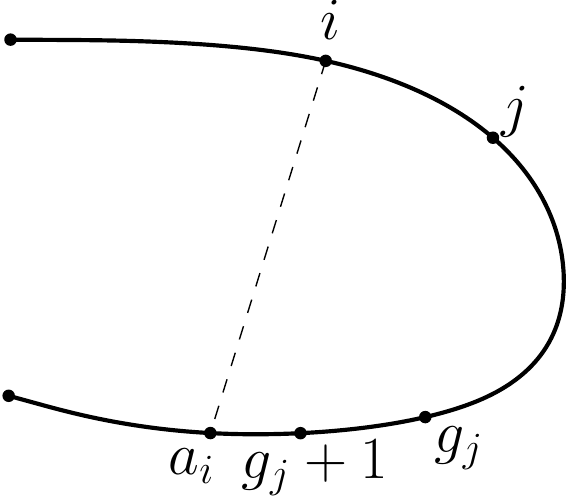}
\caption{\footnotesize Illustrating the graph $G(i,a_i)$ with $g_j+1\leq a_i$.}
\label{fig:gj}
\end{center}
\end{minipage}
\vspace{-0.15in}
\end{figure}

We have the following lemma.
\begin{lemma}\label{lem:310}
$\gamma(i,a_i)\leq \lambda$
if and only if for each $j\in [i,a_i-1]$, either $g_j\geq a_i$ or
$d^1_{C(i,a_i)}(j,g_j+1)\leq \lambda$.
\end{lemma}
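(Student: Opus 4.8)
The plan is to prove both directions by reasoning about where the diametral pair in the cycle $C(i,a_i)$ can lie. Recall from Observation~\ref{obser:gamma} that $\gamma(i,a_i)=\max_{i\le k\le l\le a_i}d_{C(i,a_i)}(k,l)$ and $d_{C(i,a_i)}(k,l)=\min\{d_P(k,l),d^1_{C(i,a_i)}(k,l)\}$. For the ``if'' direction, suppose that for every $j\in[i,a_i-1]$ either $g_j\ge a_i$ or $d^1_{C(i,a_i)}(j,g_j+1)\le\lambda$; I want to show $d_{C(i,a_i)}(k,l)\le\lambda$ for all $i\le k\le l\le a_i$. Fix such a pair $(k,l)$. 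If $d_P(k,l)\le\lambda$ there is nothing to prove, so assume $d_P(k,l)>\lambda$. Then by definition of $g_k$ we have $l> g_k$, i.e.\ $g_k+1\le l\le a_i$; in particular $g_k\le a_i-1<a_i$, so the hypothesis forces $d^1_{C(i,a_i)}(k,g_k+1)\le\lambda$. Now I compare $d^1_{C(i,a_i)}(k,l)=d_P(i,k)+|v_iv_{a_i}|+d_P(l,a_i)$ with $d^1_{C(i,a_i)}(k,g_k+1)=d_P(i,k)+|v_iv_{a_i}|+d_P(g_k+1,a_i)$: since $g_k+1\le l$, we have $d_P(l,a_i)\le d_P(g_k+1,a_i)$, hence $d^1_{C(i,a_i)}(k,l)\le d^1_{C(i,a_i)}(k,g_k+1)\le\lambda$, and therefore $d_{C(i,a_i)}(k,l)\le\lambda$ as well. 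This handles all pairs and gives $\gamma(i,a_i)\le\lambda$.

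For the ``only if'' direction, assume $\gamma(i,a_i)\le\lambda$ and take any $j\in[i,a_i-1]$ with $g_j\le a_i-1$ (the other case is vacuous). Then $g_j+1$ lies on the cycle $C(i,a_i)$, i.e.\ $i\le j< g_j+1\le a_i$, so $d_{C(i,a_i)}(j,g_j+1)\le\gamma(i,a_i)\le\lambda$. By definition of $g_j$ we have $d_P(j,g_j+1)>\lambda$, so the minimum defining $d_{C(i,a_i)}(j,g_j+1)$ must be attained by the other term, giving $d^1_{C(i,a_i)}(j,g_j+1)\le\lambda$, which is exactly what we needed.

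The step I expect to need the most care is the ``if'' direction: one must be sure that when $d_P(k,l)>\lambda$ the index $l$ really satisfies $g_k+1\le l$ so that $g_k+1$ is a genuine vertex of the cycle and the monotonicity $d_P(l,a_i)\le d_P(g_k+1,a_i)$ can be invoked; this uses $l\le a_i$ together with the definition of $g_k$ as the \emph{largest} $k'$ with $d_P(k,k')\le\lambda$. I would also remark that it suffices to test only pairs of the form $(j,g_j+1)$ rather than all pairs $(k,l)$, because for fixed left endpoint $k$ the quantity $d^1_{C(i,a_i)}(k,\cdot)$ is nonincreasing in the right endpoint while $d_P(k,\cdot)$ is nondecreasing, so the ``hardest'' pair with a binding $d^1$-term is exactly the first $l$ past $g_k$; this is the observation that later enables the constant-time test via a range-minima structure.
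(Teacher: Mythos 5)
Your proof is correct and follows essentially the same route as the paper's: the ``only if'' direction forces $d^1_{C(i,a_i)}(j,g_j+1)\leq\lambda$ because $d_P(j,g_j+1)>\lambda$ by definition of $g_j$, and the ``if'' direction reduces an arbitrary pair $(k,l)$ to the pair $(k,g_k+1)$ via the monotonicity $d_P(l,a_i)\leq d_P(g_k+1,a_i)$ for $l\geq g_k+1$. The only cosmetic difference is that you split cases on whether $d_P(k,l)\leq\lambda$ while the paper splits on whether $g_k\geq a_i$; these are equivalent.
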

\begin{proof}
\begin{itemize}
\item
Suppose $\gamma(i,a_i)\leq \lambda$. Consider any $j\in [i,a_i-1]$
such that $g_j\leq a_i-1$. Below we prove $d^1_{C(i,a_i)}(j,g_j+1)\leq
\lambda$ must hold.

By the definition of $g_j$, it holds that $d_P(j,g_j+1)>\lambda$.
Since  $\gamma(i,a_i)\leq \lambda$ and $d_{C(i,a_i)}(j,g_j+1)\leq
\gamma(i,a_i)$, we obtain $d_{C(i,a_i)}(j,g_j+1)\leq \lambda$. Note that
$d_{C(i,a_i)}(j,g_j+1)=\min\{d_P(j,g_j+1),d^1_{C(i,a_i)}(j,g_j+1)\}$.
Hence, it must hold that $d^1_{C(i,a_i)}(j,g_j+1)\leq \lambda$.

This proves one direction of the lemma.

\item
Suppose it is true that for each $j\in [i,a_i-1]$, either $g_j\geq a_i$ or
$d^1_{C(i,a_i)}(j,g_j+1)\leq \lambda$. We prove $\gamma(i,a_i)\leq \lambda$
below.

Consider any pair of indices $(k,l)$ with $i\leq k\leq l\leq a_i$. To
prove $\gamma(i,a_i)\leq \lambda$, it is sufficient to show that
$d_{C(i,a_i)}(k,l)\leq \lambda$. If $k=l$, then $d_{C(i,a_i)}(k,l)=0$
and thus $d_{C(i,a_i)}(k,l)\leq \lambda$ obviously holds. In the
following we assume $k<l$. This implies that $k\leq a_i-1$. Hence,
either $g_k\geq a_i$ or $d^1_{C(i,a_i)}(k,g_k+1)\leq \lambda$.

Recall that
$d_{C(i,a_i)}(k,l)=\min\{d_P(k,l), d^1_{C(i,a_i)}(k,l)\}$.

If $g_k\geq a_i$, then $l\leq a_i\leq g_k$, and thus, $d_P(k,l)\leq
\lambda$ by the definition of $g_k$. Hence, we obtain $d_{C(i,a_i)}(k,l)\leq \lambda$.

Otherwise, we have $d^1_{C(i,a_i)}(k,g_k+1)\leq \lambda$. If $l\leq g_k$,
we again have $d_P(k,l)\leq\lambda$ and thus $d_{C(i,a_i)}(k,l)\leq
\lambda$. If $l\geq g_k+1$, then $d_P(l,a_i)\leq d_P(g_k+1,a_i)$.
Hence, $d^1_{C(i,a_i)}(k,l)=d_P(i,k)+|v_iv_{a_i}|+d_P(l,a_i)\leq
d_P(i,k)+|v_iv_{a_i}|+d_P(g_k+1,a_i)=d^1_{C(i,a_i)}(k,g_k+1)\leq
\lambda$.
Consequently, we again obtain $d_{C(i,a_i)}(k,l)\leq \lambda$.

This proves the other direction of the lemma.
\end{itemize}
\end{proof}

Recall that $g_1\leq g_2\leq\cdots \leq g_n$. For each $k\in [1,n]$,
define $h_k$ to be the smallest index $j$ in $[1,k]$ with $g_j\geq k$.
Observe that $h_1\leq h_2\leq\cdots\leq h_n$.

Note that if $i<h_{a_i}$, then for each $j\in [i,h_{a_i}-1]$, $g_j<a_i$ and $g_j+1\leq a_i$.
Due to the preceding lemma, we further have the following lemma.
\begin{lemma}\label{lem:320}
$\gamma(i,a_i)\leq \lambda$ if and only if either $h_{a_i}\leq i$ or
$d^1_{C(i,a_i)}(j,g_j+1)\leq \lambda$ holds for each $j\in [i,h_{a_i}-1]$.
\end{lemma}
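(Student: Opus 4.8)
\textbf{Proof proposal for Lemma~\ref{lem:320}.}

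The plan is to reduce Lemma~\ref{lem:320} to Lemma~\ref{lem:310} by showing that the only ``active'' constraints in the characterization of Lemma~\ref{lem:310} are exactly those indices $j\in[i,h_{a_i}-1]$, and that every remaining index $j\in[h_{a_i},a_i-1]$ automatically satisfies the disjunction in Lemma~\ref{lem:310} via the clause $g_j\geq a_i$. First I would recall from Lemma~\ref{lem:310} that $\gamma(i,a_i)\leq\lambda$ iff for every $j\in[i,a_i-1]$ we have $g_j\geq a_i$ or $d^1_{C(i,a_i)}(j,g_j+1)\leq\lambda$. The key monotonicity facts I would use are: the $g_j$'s are nondecreasing (noted just before Lemma~\ref{lem:310}), and by definition $h_{a_i}$ is the smallest index $j\in[1,a_i]$ with $g_j\geq a_i$; hence for every $j\geq h_{a_i}$ we have $g_j\geq g_{h_{a_i}}\geq a_i$, so the clause $g_j\geq a_i$ holds and index $j$ imposes no constraint. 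Conversely, for every $j<h_{a_i}$ we have $g_j<a_i$, so the first clause fails and the relevant requirement is precisely $d^1_{C(i,a_i)}(j,g_j+1)\leq\lambda$ (and as noted in the text $g_j+1\leq a_i$, so $v_{g_j+1}$ is indeed a vertex of the cycle $C(i,a_i)$, making $d^1_{C(i,a_i)}(j,g_j+1)$ well defined).

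With this in hand, I would split into the two cases appearing in the statement. If $h_{a_i}\leq i$: then every $j\in[i,a_i-1]$ satisfies $j\geq i\geq h_{a_i}$, hence $g_j\geq a_i$, so the condition of Lemma~\ref{lem:310} holds for all $j$ and therefore $\gamma(i,a_i)\leq\lambda$; this shows $h_{a_i}\leq i$ is sufficient. If instead $i<h_{a_i}$: then the index range $[i,a_i-1]$ splits into $[i,h_{a_i}-1]$ (nonempty, where $g_j<a_i$ for each $j$) and $[h_{a_i},a_i-1]$ (where $g_j\geq a_i$ for each $j$, possibly empty). On the second subrange the Lemma~\ref{lem:310} condition is automatic, so $\gamma(i,a_i)\leq\lambda$ is equivalent to: $d^1_{C(i,a_i)}(j,g_j+1)\leq\lambda$ for every $j\in[i,h_{a_i}-1]$. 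Combining the two cases yields exactly the ``if and only if'' asserted: $\gamma(i,a_i)\leq\lambda$ iff $h_{a_i}\leq i$ or the inequality holds for all $j\in[i,h_{a_i}-1]$.

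I expect this lemma to be essentially a bookkeeping consequence of Lemma~\ref{lem:310} once the monotonicity of the $g_j$'s and the definition of $h_{a_i}$ are invoked, so there is no genuine obstacle; the only point requiring a little care is the boundary behavior — confirming that when $i<h_{a_i}$ the subrange $[i,h_{a_i}-1]$ is nonempty (it contains $i$) and that for such $j$ we indeed have $g_j+1\leq a_i$ so the quantity $d^1_{C(i,a_i)}(j,g_j+1)$ makes sense, and conversely that when $h_{a_i}\leq i$ the disjunct $g_j\geq a_i$ really does cover the entire range $[i,a_i-1]$. I would also double-check that $h_{a_i}$ is well defined here, i.e. that some $j\in[1,a_i]$ has $g_j\geq a_i$ — this holds trivially since $j=a_i$ gives $g_{a_i}\geq a_i$ by definition of $g_{a_i}$.
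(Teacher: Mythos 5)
Your proposal is correct and takes essentially the same route as the paper: both reduce the statement to Lemma~\ref{lem:310} by observing that, via the monotonicity of the $g_j$'s and the minimality in the definition of $h_{a_i}$, the disjunct $g_j\geq a_i$ holds automatically for all $j\geq h_{a_i}$ while it fails for all $j<h_{a_i}$. The only difference is presentational (you partition the index range and argue the equivalence in one pass, whereas the paper proves the two implications separately), and your extra checks on well-definedness are fine.
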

\begin{proof}
\begin{itemize}
\item
Suppose $\gamma(i,a_i)\leq \lambda$. If $h_{a_i}\leq i$, then we do not need to prove anything. In the following, we assume $h_{a_i}>i$. Consider any $j\in [i,h_{a_i}-1]$. Our goal is to show that $d^1_{C(i,a_i)}(j,g_j+1)\leq \lambda$ holds.

Indeed, since $\gamma(i,a_i)\leq \lambda$, by Lemma~\ref{lem:310}, we have either $g_j\geq a_i$ or $d^1_{C(i,a_i)}(j,g_j+1)\leq \lambda$. Since $j\in [i,h_{a_i}-1]$, $g_j<a_i$. Hence, it must be that $d^1_{C(i,a_i)}(j,g_j+1)\leq \lambda$.

This proves one direction of the lemma.

\item
Suppose either $h_{a_i}\leq i$ or $d^1_{C(i,a_i)}(j,g_j+1)\leq
\lambda$ holds for each $j\in [i,h_{a_i}-1]$. Our goal is to show that
$\gamma(i,a_i)\leq \lambda$. Consider any $k\in [i,a_i-1]$. By
Lemma~\ref{lem:310}, it is sufficient to show that either $g_k\geq a_i$ or
$d^1_{C(i,a_i)}(k,g_k+1)\leq \lambda$.

If $h_{a_i}\leq i$, then since $k\geq i$, we obtain $g_k\geq a_i$ by the definition of $h_{a_i}$.

Otherwise, $d^1_{C(i,a_i)}(j,g_j+1)\leq \lambda$ holds for each $j\in [i,h_{a_i}-1]$. If $k\geq h_{a_i}$, then we still have $g_k\geq a_i$. Otherwise, $k$ is in $[i,h_{a_i}-1]$, and thus it holds that  $d^1_{C(i,a_i)}(k,g_k+1)\leq \lambda$.

This proves the other direction of the lemma.
\end{itemize}
\end{proof}

Let $|C(i,a_i)|$ denote the total length of the cycle $C(i,a_i)$,
i.e., $|C(i,a_i)|=d_P(i,a_i)+|v_iv_{a_i}|$.
The following observation is crucial because it immediately leads to
our algorithm in Lemma~\ref{lem:30}.
\begin{observation}\label{obser:110}
$\gamma(i,a_i)\leq \lambda$
if and only if either $h_{a_i}\leq i$ or $\min_{j\in
[i,h_{a_i}-1]}\{d_P(j,g_j+1)\} \geq |C(i,a_i)|-\lambda$.
\end{observation}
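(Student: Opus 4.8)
The plan is to reduce directly to Lemma~\ref{lem:320} and then perform a single algebraic rewriting of the quantity $d^1_{C(i,a_i)}(j,g_j+1)$ in terms of $|C(i,a_i)|$ and $d_P(j,g_j+1)$. By Lemma~\ref{lem:320}, the statement $\gamma(i,a_i)\leq \lambda$ is equivalent to: either $h_{a_i}\leq i$, or $d^1_{C(i,a_i)}(j,g_j+1)\leq \lambda$ for every $j\in[i,h_{a_i}-1]$. The first alternative matches verbatim the first alternative in the observation, so from here on I would assume $h_{a_i}>i$ and focus on transforming the second alternative into $\min_{j\in[i,h_{a_i}-1]}\{d_P(j,g_j+1)\}\geq |C(i,a_i)|-\lambda$.

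The key step is to check the index inequalities that let $P(i,a_i)$ be split at $v_j$ and $v_{g_j+1}$. Fix $j\in[i,h_{a_i}-1]$. By the definition of $g_j$ we always have $g_j\geq j$, hence $j<g_j+1$; and since $j\leq h_{a_i}-1$, the definition of $h_{a_i}$ forces $g_j<a_i$, hence $g_j+1\leq a_i$. Therefore $i\leq j<g_j+1\leq a_i$, so the subpath $P(i,a_i)$ decomposes as $P(i,j)\cup P(j,g_j+1)\cup P(g_j+1,a_i)$, giving
\[
d_P(i,j)+d_P(g_j+1,a_i)=d_P(i,a_i)-d_P(j,g_j+1).
\]
Substituting into $d^1_{C(i,a_i)}(j,g_j+1)=d_P(i,j)+|v_iv_{a_i}|+d_P(g_j+1,a_i)$ and recalling $|C(i,a_i)|=d_P(i,a_i)+|v_iv_{a_i}|$ yields $d^1_{C(i,a_i)}(j,g_j+1)=|C(i,a_i)|-d_P(j,g_j+1)$.

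Finally I would conclude: $d^1_{C(i,a_i)}(j,g_j+1)\leq \lambda$ is equivalent to $d_P(j,g_j+1)\geq |C(i,a_i)|-\lambda$, and this holds for every $j\in[i,h_{a_i}-1]$ exactly when the minimum of $d_P(j,g_j+1)$ over that range is at least $|C(i,a_i)|-\lambda$. Combining with the $h_{a_i}\leq i$ case handled above completes both directions. I do not anticipate a real obstacle here; the only thing requiring care is the verification $i\le j<g_j+1\le a_i$ that justifies the path decomposition, which is precisely why the range $[i,h_{a_i}-1]$ (rather than $[i,a_i-1]$) is the right one to work with.
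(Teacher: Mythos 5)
Your proposal is correct and follows essentially the same route as the paper: invoke Lemma~\ref{lem:320} and rewrite $d^1_{C(i,a_i)}(j,g_j+1)$ as $|C(i,a_i)|-d_P(j,g_j+1)$, turning the pointwise condition into a condition on the minimum over $[i,h_{a_i}-1]$. Your explicit check that $i\leq j<g_j+1\leq a_i$ (justifying the path decomposition) is a detail the paper states more tersely, but the argument is the same.
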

\begin{proof}
Suppose $h_{a_i}> i$. Then, for each $j\in [i,h_{a_i}-1]$, $g_j<a_i$ and $g_j+1\leq a_i$.
Note that $d^1_{C(i,a_i)}(j,g_j+1) = |C(i,a_i)|-d_P(i,g_j+1)$.
Hence, $d^1_{C(i,a_i)}(j,g_j+1)\leq \lambda$ is equivalent
to $d_P(j,g_j+1)\geq |C(i,a_i)|-\lambda$.
Therefore, $d^1_{C(i,a_i)}(j,g_j+1)\leq \lambda$ holds for each $j\in
[i,h_{a_i}-1]$ if and only if $\min_{j\in
[i,h_{a_i}-1]}\{d_P(j,g_j+1)\} \geq |C(i,a_i)|-\lambda$.

By Lemma~\ref{lem:320}, the observation follows.
\end{proof}


\begin{lemma}\label{lem:30}
With $O(n)$ time preprocessing, given any $i\in [1,n]$ and the
corresponding $a_i$ with $i< a_i$, whether $\gamma(i,a_i)\leq \lambda$
can be determined in constant time.
\end{lemma}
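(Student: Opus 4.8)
The plan is to turn Observation~\ref{obser:110} directly into a data structure: we precompute everything that appears on its right-hand side, so that a single query reduces to one table lookup, a couple of prefix-sum subtractions, and one range-minimum query.

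\textbf{Preprocessing.} First I would build the prefix-sum array $A[1\cdots n]$ of Lemma~\ref{lem:preprocess} in $O(n)$ time, so that $d_P(i,j)=A[j]-A[i]$ and hence $|C(i,a_i)|=d_P(i,a_i)+|v_iv_{a_i}|$ are available in $O(1)$ time. Next I would compute $g_j$ for all $j\in[1,n]$. We may assume $\lambda\ge 0$ (otherwise $\lambda$ is trivially infeasible and the whole decision problem is settled), so $d_P(j,j)=0\le\lambda$ and $g_j\ge j$ is always well defined; since $g_1\le g_2\le\cdots\le g_n$, all the $g_j$'s are obtained by a single left-to-right two-pointer sweep (maintain a never-decreasing pointer $k$ and, for each $j$, advance $k$ while $A[k+1]-A[j]\le\lambda$), in $O(n)$ total time. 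From the $g_j$'s I would then compute $h_k$ for all $k\in[1,n]$: because $g_k\ge k$ we have $h_k\le k$, and because $g_1\le\cdots\le g_n$ the sequence $h_1\le h_2\le\cdots\le h_n$ is monotone, so another two-pointer sweep produces all the $h_k$'s in $O(n)$ time. Finally I would form the array $B[1\cdots n]$ with $B[j]=d_P(j,g_j+1)=A[g_j+1]-A[j]$ whenever $g_j<n$ and $B[j]=+\infty$ whenever $g_j=n$, and build on $B$ a linear-size range-minima data structure~\cite{ref:BenderTh00,ref:HarelFa84} that answers any range-minimum query in $O(1)$ time. All of the above runs in $O(n)$ time.

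\textbf{Query.} Given $i$ and the corresponding $a_i$ with $i<a_i$, I would first look up $h_{a_i}$ in $O(1)$ time. If $h_{a_i}\le i$, then Observation~\ref{obser:110} already tells us $\gamma(i,a_i)\le\lambda$, and we report ``yes''. Otherwise $h_{a_i}>i$, and every $j\in[i,h_{a_i}-1]$ satisfies $j<h_{a_i}$, hence $g_j<a_i\le n$ by the definition of $h_{a_i}$; thus $g_j+1\le a_i$ is a legitimate vertex index and $B[j]=d_P(j,g_j+1)$ is finite and equals exactly the quantity that appears in Observation~\ref{obser:110}. I would then obtain $m=\min_{j\in[i,h_{a_i}-1]}B[j]$ with one range-minimum query in $O(1)$ time, compute $|C(i,a_i)|-\lambda$ in $O(1)$ time, and report $\gamma(i,a_i)\le\lambda$ if and only if $m\ge |C(i,a_i)|-\lambda$. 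Correctness is immediate from Observation~\ref{obser:110}.

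\textbf{Expected main obstacle.} The per-query step is essentially a verbatim evaluation of Observation~\ref{obser:110}, so the only real work is in the preprocessing, and the point that needs care is getting the $g_j$'s and $h_k$'s in $O(n)$ \emph{total} time rather than $O(n\log n)$; this is exactly where the monotonicities $g_1\le\cdots\le g_n$ and $h_1\le\cdots\le h_n$ (Lemma-level observations stated before Observation~\ref{obser:110}) are used, via the two-pointer sweeps, together with the boundary conventions ($B[j]=\infty$ when $g_j=n$, and the well-definedness of $h_k$ guaranteed by $g_k\ge k$). Once $B$ and its range-minima structure are in place, replacing the previous $O(\log n)$-per-query bound of Gro{\ss}e et al.\ by $O(1)$ is precisely what the range-minima structure buys.
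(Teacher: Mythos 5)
Your proposal is correct and matches the paper's own proof essentially verbatim: both precompute the $g_j$'s and $h_k$'s via the stated monotonicities, build a range-minima structure on the array $B[j]=d_P(j,g_j+1)$, and answer each query by checking $h_{a_i}\leq i$ or comparing the range minimum against $|C(i,a_i)|-\lambda$, with correctness delegated to Observation~\ref{obser:110}. Your additional remarks on the two-pointer sweeps and boundary conventions only make explicit what the paper leaves implicit.
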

\begin{proof}
As preprocessing, we first compute $g_j$ for all $j=1,2,\ldots, n$,
which can be done in $O(n)$ time due to the
monotonicity property $g_1\leq g_2\leq \ldots \leq g_n$. Then, we compute $h_k$ for all
$k=1,2,\ldots,n$, which can also be done in $O(n)$ time due to the
monotonicity property $h_1\leq h_2\leq \ldots \leq h_n$.
Next, we compute an array $B[1,\ldots,n]$ with $B[j]=d_P(j,g_j+1)$
for each $j\in [1,n]$ (let $d_P(j,g_j+1)=\infty$ if $g_j+1>n$).
We build a range-minima data structure on $B$
\cite{ref:BenderTh00,ref:HarelFa84}. The range
minima data structure can be built in $O(n)$ time such that given any
pair $(i,j)$ with
$1\leq i\leq j\leq n$, the minimum value of the subarray $B[i\cdots j]$ can be
returned in constant time \cite{ref:BenderTh00,ref:HarelFa84}.
This finishes the preprocessing step, which takes $O(n)$ time in
total.

Consider any $i$ and the corresponding $a_i$ with $i<a_i$. Our goal is to determine
whether $\gamma(i,a_i)\leq \lambda$, which can be done in $O(1)$ time
as follows.

By Observation~\ref{obser:110}, $\gamma(i,a_i)\leq \lambda$
if and only if either $h_{a_i}\leq i$ or $\min_{j\in
[i,h_{a_i}-1]}\{d_P(j,g_j+1)\} \geq |C(i,a_i)|-\lambda$.
Since $h_{a_i}$ has been computed in the preprocessing, we check whether
$h_{a_i}\leq i$ is true. If yes, then we are done with the assertion
that
$\gamma(i,a_i)\leq \lambda$. Otherwise, we need to determine whether $\min_{j\in
[i,h_{a_i}-1]}\{d_P(j,g_j+1)\} \geq |C(i,a_i)|-\lambda$ holds. To this
end, we first compute $\min_{j\in [i,h_{a_i}-1]}\{d_P(j,g_j+1)\}$ in
constant time by querying the range-minima data structure on $B$ with
$(i,h_{a_i}-1)$. Note that $|C(i,a_i)|$ can be computed in constant time. Therefore, we can determine
whether $\gamma(i,a_i)\leq \lambda$ in $O(1)$ time.
This proves the lemma.
\end{proof}

With Lemma~\ref{lem:30}, the decision problem can be solved in $O(n)$
time.
The proof of the following theorem summarizes our algorithm.

\begin{theorem}\label{theo:decision}
Given any $\lambda$, we can determine whether $\lambda$ is feasible in $O(n)$
time, and further, if $\lambda$ is feasible, a feasible edge can  be found in $O(n)$
time.
\end{theorem}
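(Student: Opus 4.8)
The plan is simply to orchestrate the machinery built in Lemmas~\ref{lem:preprocess}, \ref{lem:10}, \ref{lem:20}, and~\ref{lem:30} together with the feasibility characterization recalled in this section, namely that $\lambda$ is feasible if and only if $Q_i\neq\emptyset$ for some $i\in[1,n]$, where $Q_i=[1,I_i(\alpha)]\cap[I_i(\beta),n]\cap[1,I_i(\gamma)]\cap[I_i(\delta),n]$. The proof is therefore mostly an assembly argument.

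First I would run, in $O(n)$ total time, every preprocessing routine needed later: the prefix-sum array of Lemma~\ref{lem:preprocess}; the index $k$ of Lemma~\ref{lem:20} and its symmetric counterpart used for $\alpha$; and the arrays $g$, $h$, $B$ together with the range-minima structure over $B$ required by Lemma~\ref{lem:30}. Then, still in $O(n)$ time, I would compute $I_i(\alpha)$ and $I_i(\beta)$ for all $i\in[1,n]$ using Lemma~\ref{lem:20}, and $I_i(\delta)$ for all $i\in[1,n]$ by the incremental left-to-right scan described just before Lemma~\ref{lem:20} (constant-time evaluation of $\delta$ from Lemma~\ref{lem:preprocess}, correctness from the monotonicity in Lemma~\ref{lem:10}). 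The value $\infty$ already encodes an empty interval for $\beta$ and $\delta$; for $\alpha$ and $\gamma$ I use the convention that $[1,I_i(\alpha)]$ and $[1,I_i(\gamma)]$ are empty when no index $j\in[i,n]$ makes the corresponding function $\le\lambda$.

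Next comes the main loop. For each $i=1,2,\ldots,n$ I would compute in $O(1)$ time the interval $J_i=[1,I_i(\alpha)]\cap[I_i(\beta),n]\cap[I_i(\delta),n]=[\max\{I_i(\beta),I_i(\delta)\},\,I_i(\alpha)]$. If $J_i=\emptyset$, then $Q_i\subseteq J_i=\emptyset$ and I move on. Otherwise let $a_i=\max\{I_i(\beta),I_i(\delta)\}$ be the smallest index of $J_i$; as observed above, $a_i\ge I_i(\beta)\ge i$, and since $[1,I_i(\gamma)]$ is a prefix interval, $Q_i\neq\emptyset$ iff $a_i\in[1,I_i(\gamma)]$, i.e. iff $\gamma(i,a_i)\le\lambda$. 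If $a_i=i$ this holds trivially since $i\le I_i(\gamma)$; if $a_i>i$ I invoke Lemma~\ref{lem:30} to decide $\gamma(i,a_i)\le\lambda$ in $O(1)$ time. As soon as some $i$ yields $Q_i\neq\emptyset$ I stop and report that $\lambda$ is feasible, returning the edge $e(i,a_i)$; if the loop finishes without success I report that $\lambda$ is infeasible. Each iteration costs $O(1)$, so the loop, and hence the whole algorithm, runs in $O(n)$ time. That $e(i,a_i)$ is a genuine feasible edge follows because $a_i\le I_i(\alpha)$, $a_i\ge I_i(\beta)$, and $a_i\ge I_i(\delta)$ give $\alpha(i,a_i),\beta(i,a_i),\delta(i,a_i)\le\lambda$ via Observation~\ref{obser:20} and the definitions of the $I_i$'s, which together with $\gamma(i,a_i)\le\lambda$ and Observation~\ref{obser:10} yield $D(i,a_i)\le\lambda$.

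I do not expect a real obstacle: the substantive work — the $O(n)$ construction of $I_i(\alpha),I_i(\beta)$ in Lemma~\ref{lem:20} and, above all, the constant-time $\gamma$-test of Lemma~\ref{lem:30} — is already in place. The only points requiring a little care are bookkeeping with the $\infty$ sentinels, the degenerate case $a_i=i$ (where $G(i,a_i)=P$), and the need to exhibit the concrete edge $e(i,a_i)$ rather than merely argue $Q_i\neq\emptyset$; none of these affects the $O(n)$ bound.
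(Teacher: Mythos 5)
Your proposal is correct and follows essentially the same route as the paper's proof: preprocess, compute $I_i(\alpha),I_i(\beta),I_i(\delta)$ for all $i$ in $O(n)$ time, then scan $i=1,\ldots,n$, form the three-interval intersection, take its smallest index $a_i$, handle $a_i=i$ trivially, and otherwise decide $\gamma(i,a_i)\le\lambda$ in $O(1)$ time via Lemma~\ref{lem:30}. The only (harmless) addition is your explicit verification that $e(i,a_i)$ is a feasible edge, which the paper leaves to the earlier feasibility characterization.
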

\begin{proof}
First, we do the preprocessing in Lemma~\ref{lem:preprocess} in
$O(n)$ time. Then, for each $f\in \{\alpha,\beta,\delta\}$, we compute
$I_i(f)$ for all $i=1,2,\ldots,n$, in $O(n)$ time. We also do the
preprocessing in Lemma~\ref{lem:30}.

Next, for each $i\in [1,n]$, we do the following. Compute the
intersection $[1,I_i(\alpha)]\cap [I_i(\beta),n]\cap [I_i(\delta),n]$
in constant time. If the intersection is empty, then we are done for
this $i$. Otherwise, obtain the
smallest index $a_i$ in the above intersection. If $a_i\leq i$, then
we stop the algorithm with the assertion that $\lambda$ is feasible and report $e(i,a_i)$ as a feasible edge.
Otherwise, we use
Lemma~\ref{lem:30} to determine whether $\gamma(i,a_i)\leq \lambda$ in
constant time. If yes, we stop the algorithm with the assertion that $\lambda$ is
feasible and report $e(i,a_i)$ as a feasible edge. Otherwise, we proceed on $i+1$.

If the algorithm does not stop after we check all $i\in [1,n]$, then
we stop the algorithm with the assertion that $\lambda$ is not
feasible. Clearly, we spend $O(1)$ time on each $i$, and thus, the
total time of the algorithm is $O(n)$.
\end{proof}

\section{The Optimization Problem}
\label{sec:optimization}

In this section, we present our algorithm that solves the optimization
problem in $O(n\log n)$ time, by making use of our algorithm for the
decision problem given in Section \ref{sec:decision} (we will refer to it
as the {\em decision algorithm}).
It is sufficient to compute $\lambda^*$, after which we can use our
decision algorithm to find an optimal new edge in additional $O(n)$
time.

We start with an easy observation that $\lambda^*$ must be equal to
the diameter $D(i,j)$ of $G(i,j)$ for some pair $(i,j)$ with $1\leq
i\leq j\leq n$. Further, by Observation~\ref{obser:10},
$\lambda^*$ is equal to $f(i,j)$ for some $f\in
\{\alpha,\beta,\gamma,\delta\}$ and some pair $(i,j)$ with $1\leq
i\leq j\leq n$.

For each $f\in \{\alpha,\beta,\gamma,\delta\}$, define
$S_f=\{f(i,j)\ |\ 1\leq i\leq j\leq n\}$.
Let $S=\cup_{f\in \{\alpha,\beta,\gamma,\delta\}}S_f$.
According to our discussion above, $\lambda^*$  is in $S$. Further, note
that $\lambda^*$ is the smallest feasible value of $S$.
We will not compute the entire set $S$ since
$|S|=\Omega(n^2)$.
For each $f\in \{\alpha,\beta,\gamma,\delta\}$,
let $\lambda_{f}$ be the smallest feasible value in $S_f$.
Hence, we have $\lambda^*=\min\{\lambda_{\alpha},\lambda_{\beta},
\lambda_{\gamma},\lambda_{\delta}\}$.

In the following, we first compute $\lambda_{\alpha},\lambda_{\beta},\lambda_{\delta}$
in $O(n\log n)$ time by using our decision algorithm and
the sorted-matrix searching techniques \cite{ref:FredericksonGe84,ref:FredericksonFi83}.

\subsection{Computing $\lambda_{\alpha},\lambda_{\beta}$, and $\lambda_{\delta}$}

For convenience, we begin with computing $\lambda_{\beta}$.

We define an $n\times n$ matrix $M[1\cdots n;1\cdots n]$: For each
$1\leq i\leq n$ and $1\leq j\leq n$, define $M[i,j]=\beta(i,j)$ if
$j\geq i$ and $M[i,j]=\beta(i,i)$ otherwise. By
Observation~\ref{obser:20}, the following lemma
shows that $M$ is a sorted matrix in the sense that each row is sorted in descending order from left to right and each column is sorted in descending order from top to bottom.

\begin{lemma}
For each $1\leq i\leq n$, $M[i,j]\geq M[i,j+1]$ for any $j\in [1,n-1]$;
for each $1\leq j\leq n$, $M[i,j]\geq M[i+1,j]$ for any $i\in [1,n-1]$.
\end{lemma}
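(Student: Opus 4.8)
The plan is to derive the sortedness of $M$ entirely from the monotonicity properties of $\beta$ in Observation~\ref{obser:20}, via a short case analysis that separates entries on or above the diagonal (where $M[i,j]=\beta(i,j)$) from entries strictly below it (where $M[i,j]=\beta(i,i)$).

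For the row inequality $M[i,j]\ge M[i,j+1]$ with $i$ fixed and $j\in[1,n-1]$: if $j\ge i$, then $j+1\ge i$ as well, both entries are ``genuine'' $\beta$-values, and $M[i,j]=\beta(i,j)\ge\beta(i,j+1)=M[i,j+1]$ is precisely the first part of Observation~\ref{obser:20} (which applies since $i\le j\le n-1$). If $j<i$, then $M[i,j]=\beta(i,i)$; and $M[i,j+1]$ also equals $\beta(i,i)$, whether because $j+1<i$ (lower-triangle definition) or because $j+1=i$ (the diagonal entry $M[i,i]=\beta(i,i)$), so the two entries are in fact equal.

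For the column inequality $M[i,j]\ge M[i+1,j]$ with $j$ fixed and $i\in[1,n-1]$: if $i<j$ (equivalently $i+1\le j$), both entries are genuine $\beta$-values and $M[i,j]=\beta(i,j)\ge\beta(i+1,j)=M[i+1,j]$ is the second part of Observation~\ref{obser:20}. If $i\ge j$, then $M[i,j]=\beta(i,i)$ and $M[i+1,j]=\beta(i+1,i+1)$ (since $j\le i<i+1$), so it suffices to prove $\beta(i,i)\ge\beta(i+1,i+1)$; this follows by chaining the two parts of Observation~\ref{obser:20}: $\beta(i,i)\ge\beta(i,i+1)$ by the first part (valid since $i\le i\le n-1$) and $\beta(i,i+1)\ge\beta(i+1,i+1)$ by the second part (valid since $i<i+1\le n$).

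The only points requiring care are verifying that the index ranges permit each half of Observation~\ref{obser:20} to be invoked in every case, and the boundary case $j+1=i$ of the row argument, where the right-hand neighbor lands exactly on the diagonal; neither is a real obstacle, so this is essentially a bookkeeping argument once the case split is in place.
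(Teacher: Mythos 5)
Your proof is correct and follows essentially the same case analysis as the paper's (splitting on whether the entries lie on/above or strictly below the diagonal and invoking Observation~\ref{obser:20} for the genuine $\beta$-values). The only minor difference is in the below-diagonal column case: the paper justifies $\beta(i,i)\geq\beta(i+1,i+1)$ via the explicit identity $\beta(k,k)=d_P(k,n)$, whereas you chain the two monotonicity parts of Observation~\ref{obser:20} through $\beta(i,i+1)$ --- both are valid, and yours is arguably a bit more self-contained.
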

\begin{proof}
Consider any two adjacent matrix elements $M[i,j]$ and $M[i,j+1]$ in
the same row. If $j\geq
i$, then $M[i,j]=\beta(i,j)$ and $M[i,j+1]=\beta(i,j+1)$.
By Observation~\ref{obser:20}, $M[i,j]\geq M[i,j+1]$.
If $j<i$, then $M[i,j]=M[i,j+1]=\beta(i,i)$.
Hence, in either case, $M[i,j]\geq M[i,j+1]$ holds.

Consider any two adjacent matrix elements $M[i,j]$ and $M[i+1,j]$ in
the same column. If $j\geq i+1$, then $M[i,j]=\beta(i,j)$ and $M[i+1,j]=\beta(i+1,j)$.
By Observation~\ref{obser:20}, we obtain $M[i,j]\geq M[i,j+1]$.
If $j<i+1$, then $M[i,j]=\beta(j,j)$ and
$M[i,j+1]=\beta(j+1,j+1)$. Note that $\beta(j,j)$ is essentially equal
to $d_P(j,n)$ and $\beta(j+1,j+1)$ is equal
to $d_P(j+1,n)$. Clearly, $d_P(j,n)\geq d_P(j+1,n)$.
Hence, in either case, $M[i,j]\geq M[i,j+1]$.
\end{proof}

Note that each element of $S_{\beta}$ is in $M$ and vice versa.
Since $\lambda_{\beta}$ is the smallest feasible value of $S_{\beta}$,
$\lambda_{\beta}$ is also the smallest
feasible value of $M$. We do not construct $M$ explicitly.
Rather, given any $i$ and $j$, we can ``evaluate'' $M[i,j]$ in $O(\log n)$
time since $\beta(i,j)$ can be computed in $O(\log n)$ time if $i\leq
j$ by Lemma~\ref{lem:preprocess}. Using the sorted-matrix searching techniques
\cite{ref:FredericksonGe84,ref:FredericksonFi83}, we can find
$\lambda_{\beta}$ in $M$ by calling our decision algorithm $O(\log n)$
times and evaluating $O(n)$ elements of $M$. The total time on calling
the decision algorithm is $O(n\log n)$ and the total time on
evaluating matrix elements is also $O(n\log n)$. Hence, we can compute
$\lambda_{\beta}$ in $O(n\log n)$ time.

Computing $\lambda_{\alpha}$ and $\lambda_{\delta}$ can done similarly
in $O(n\log n)$ time, although the corresponding sorted matrices may
be defined slightly differently. We omit the details.
However, we cannot compute $\lambda_{\gamma}$ in $O(n\log n)$ time in
the above way, and again this is due to the $\lambda$-computation
difficulty mentioned in Section~\ref{sec:pre}.

Note that having
$\lambda_{\alpha},\lambda_{\beta}$, and $\lambda_{\delta}$ essentially reduces our search
space for $\lambda^*$ from $S$ to
$S_{\gamma}\cup \{\lambda_{\alpha},\lambda_{\beta},\lambda_{\delta}\}$.

We compute $\lambda_1=\min\{\lambda_{\alpha},\lambda_{\beta},\lambda_{\delta}\}$.
Thus, $\lambda^*=\min\{\lambda_1,\lambda_{\gamma}\}$.
Hence, if $\lambda_{\gamma}\geq \lambda_1$, then $\lambda^*=
\lambda_1$ and we are done for computing $\lambda^*$. Otherwise (i.e.,
$\lambda_{\gamma}<\lambda_1$), it must be that $\lambda^*=\lambda_{\gamma}$ and
we need to compute $\lambda_{\gamma}$. To compute $\lambda_{\gamma}$,
again we cannot use the similar way as the above for computing
$\lambda_{\beta}$. Instead, we use the following approach.
We should point out that the success of the approach relies on the
information implied by $\lambda_{\gamma}<\lambda_1$.

\subsection{Computing $\lambda^*$ in the Case $\lambda_{\gamma}<\lambda_1$}

We assume $\lambda_{\gamma}<\lambda_1$. Hence, $\lambda^*=\lambda_{\gamma}$.
Let $e(i^*,j^*)$ be the new edge added to $P$ in an optimal
solution. We also call $e(i^*,j^*)$ an {\em optimal edge}.

Since $\lambda^*=\lambda_{\gamma}<\lambda_1$, we have the following observation.

\begin{observation}\label{obser:40}
If $\lambda_{\gamma}<\lambda_1$ and $e(i^*,j^*)$ is an optimal edge, then
$\lambda^*=\gamma(i^*,j^*)$.
\end{observation}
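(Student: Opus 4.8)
\textbf{Proof proposal for Observation~\ref{obser:40}.}
The plan is to argue by contradiction using the three facts already established: (i) $D(i,j)=\max\{\alpha(i,j),\beta(i,j),\gamma(i,j),\delta(i,j)\}$ for every pair (Observation~\ref{obser:10}); (ii) $\lambda^*=D(i^*,j^*)$ since $e(i^*,j^*)$ is an optimal edge; and (iii) $\lambda^*=\lambda_{\gamma}<\lambda_1=\min\{\lambda_\alpha,\lambda_\beta,\lambda_\delta\}$, which holds by the case assumption. Combining (i) and (ii), $\lambda^*$ equals one of $\alpha(i^*,j^*)$, $\beta(i^*,j^*)$, $\gamma(i^*,j^*)$, $\delta(i^*,j^*)$, namely whichever attains the maximum. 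So it suffices to rule out the possibility that the maximum is attained (only) by one of $\alpha,\beta,\delta$.

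First I would record the feasibility side: since $D(i^*,j^*)=\lambda^*$, the value $\lambda^*$ is feasible, and moreover $f(i^*,j^*)\le \lambda^*$ for every $f\in\{\alpha,\beta,\gamma,\delta\}$. Now suppose for contradiction that $\lambda^*\neq\gamma(i^*,j^*)$; then $\gamma(i^*,j^*)<\lambda^*$ and the maximum in (i) is attained by some $f\in\{\alpha,\beta,\delta\}$, giving $\lambda^*=f(i^*,j^*)\in S_f$. But $\lambda^*$ is feasible, so $\lambda^*$ is a feasible value of $S_f$, hence $\lambda_f\le\lambda^*$ by the definition of $\lambda_f$ as the smallest feasible value in $S_f$. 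This yields $\lambda_1=\min\{\lambda_\alpha,\lambda_\beta,\lambda_\delta\}\le\lambda_f\le\lambda^*$, contradicting $\lambda^*=\lambda_{\gamma}<\lambda_1$. Therefore $\lambda^*=\gamma(i^*,j^*)$, as claimed.

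There is essentially no hard step here; the one place to be slightly careful is the logical move "the max in (i) is attained by some $f\in\{\alpha,\beta,\delta\}$, so $\lambda^*\in S_f$ for that $f$'' — this needs $\gamma(i^*,j^*)<\lambda^*$, which is exactly the negation of the statement, so the case distinction "$\gamma$ attains the max or it doesn't'' must be handled as I did rather than assuming a unique maximizer. Everything else is bookkeeping with the definitions of $S_f$, $\lambda_f$, $\lambda_1$, and feasibility already laid out in the excerpt.
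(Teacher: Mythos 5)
Your proof is correct and follows essentially the same route as the paper: assume $\lambda^*\neq\gamma(i^*,j^*)$, deduce via Observation~\ref{obser:10} that $\lambda^*=f(i^*,j^*)\in S_f$ for some $f\in\{\alpha,\beta,\delta\}$, and derive the contradiction $\lambda_1\leq\lambda_f\leq\lambda^*<\lambda_1$ from the feasibility of $\lambda^*$ and the definition of $\lambda_f$. Your handling of the maximizer case distinction is, if anything, slightly more careful than the paper's "without loss of generality" phrasing.
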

\begin{proof}
Assume to the contrary that $\lambda^*\neq \gamma(i^*,j^*)$. Then, by
Observation~\ref{obser:10}, $\lambda^*$ is equal to one of
$\alpha(i^*,j^*),\beta(i^*,j^*)$, and $\delta(i^*,j^*)$. Without loss
of generality, assume $\lambda^*=\alpha(i^*,j^*)$. Since
$\alpha(i^*,j^*)$ is in $S_{\alpha}$, $\lambda^*$ must be the smallest
feasible value of $S_{\alpha}$, i.e., $\lambda^*=\lambda_{\alpha}$.
However, this contradicts with that
$\lambda^*=\lambda_{\gamma}<\lambda_1=\min\{\lambda_{\alpha},\lambda_{\beta},\lambda_{\gamma}\}\leq
\lambda_{\alpha}$.
\end{proof}

For any $i\in [1,n]$, for each $f\in \{\alpha,\beta,\gamma,\delta\}$, with respect to $\lambda_1$, we define $I'_i(f)$ in a similar way to $I_i(f)$ defined in Section~\ref{sec:decision} with respect to
$\lambda$ except that we change ``$\leq \lambda$'' to
``$<\lambda_1$''. Specifically, define $I_i'(\alpha)$ to be the largest index $j\in [i,n]$ such that $\alpha(i,j)<\lambda_1$. $I'(\gamma)$ is defined similarly to $I_i'(\alpha)$.
If $\beta(i,n)<\lambda_1$, then define $I_i'(\beta)$ to be the
smallest index $j\in [i,n]$ such that $\beta(i,j)<\lambda_1$;
otherwise $I'_i(\beta)=\infty$.
$I'_i(\delta)$ is defined similarly to $I_i'(\beta)$.
Note that similar monotonicity properties for $I_i'(f)$ with $f\in
\{\alpha,\beta,\gamma,\delta\}$ to those in Lemma~\ref{lem:10} also
hold.

Recall that $e(i^*,j^*)$ is an optimal edge.
An easy observation is that since $\lambda_1$ is strictly larger than $\lambda^*$, the
intersection $[1,I'_{i^*}(\alpha)]\cap [I'_{i^*}(\beta),n]\cap [I'_{i^*}(\delta),n]$
cannot be empty. Let $a_{i^*}$ be the smallest index in the above
intersection. Note that $i^*\leq a_{i^*}$ since $i^*\leq
I'_{i^*}(\beta)\leq a_{i^*}$. The following lemma shows that
$e(i^*,a_{i^*})$ is actually an optimal edge.

\begin{lemma}\label{lem:50}
If $\lambda_{\gamma}<\lambda_1$ and $e(i^*,j^*)$ is an optimal edge, then $j^*=a_{i^*}$.
\end{lemma}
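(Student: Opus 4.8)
The plan is to prove two things: that $a_{i^*}\le j^*$ (essentially the observation already flagged right before the statement), and that $D(i^*,a_{i^*})=\lambda^*$, i.e.\ $e(i^*,a_{i^*})$ is also an optimal edge; together these yield $j^*=a_{i^*}$. The whole argument runs through Observation~\ref{obser:40} and the minimality of $\lambda_\alpha,\lambda_\beta,\lambda_\delta$.

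First I would record that by Observation~\ref{obser:40} we have $\gamma(i^*,j^*)=\lambda^*$, and in fact $\alpha(i^*,j^*),\beta(i^*,j^*),\delta(i^*,j^*)$ are all \emph{strictly} below $\lambda^*$. Indeed $D(i^*,j^*)=\lambda^*$ forces $\alpha(i^*,j^*)\le\lambda^*$; if equality held, then $\lambda^*\in S_\alpha$, and since $\lambda^*$ is the smallest feasible value in $S$ it would be the smallest feasible value in $S_\alpha$, giving $\lambda_\alpha=\lambda^*=\lambda_\gamma<\lambda_1\le\lambda_\alpha$, a contradiction; the same argument rules out $\beta(i^*,j^*)=\lambda^*$ and $\delta(i^*,j^*)=\lambda^*$. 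Hence all three are $<\lambda^*<\lambda_1$, and by the definitions of $I'_{i^*}(\alpha),I'_{i^*}(\beta),I'_{i^*}(\delta)$ this places $j^*$ in $[1,I'_{i^*}(\alpha)]\cap[I'_{i^*}(\beta),n]\cap[I'_{i^*}(\delta),n]$. In particular this intersection is nonempty, and since $a_{i^*}$ is its smallest element, $a_{i^*}\le j^*$.

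For the reverse direction I would analyze the graph $G(i^*,a_{i^*})$. Because $a_{i^*}\le j^*$, the monotonicity of Observation~\ref{obser:20} gives $\alpha(i^*,a_{i^*})\le\alpha(i^*,j^*)<\lambda^*$ and $\gamma(i^*,a_{i^*})\le\gamma(i^*,j^*)=\lambda^*$; and since $a_{i^*}$ lies in the intersection above, $\alpha(i^*,a_{i^*}),\beta(i^*,a_{i^*}),\delta(i^*,a_{i^*})$ are all $<\lambda_1$. Hence $D(i^*,a_{i^*})=\max\{\alpha(i^*,a_{i^*}),\beta(i^*,a_{i^*}),\gamma(i^*,a_{i^*}),\delta(i^*,a_{i^*})\}<\lambda_1$, while on the other hand $D(i^*,a_{i^*})\ge\lambda^*$ by the minimality of $\lambda^*$ (equivalently, $D(i^*,a_{i^*})$ is a feasible value). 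Now I would argue the maximum defining $D(i^*,a_{i^*})$ cannot be $\alpha(i^*,a_{i^*})$, since that is $<\lambda^*\le D(i^*,a_{i^*})$; and it cannot be $\beta(i^*,a_{i^*})$ or $\delta(i^*,a_{i^*})$, because e.g.\ $D(i^*,a_{i^*})=\beta(i^*,a_{i^*})$ would put a feasible value ($\ge\lambda^*$) into $S_\beta$, forcing $D(i^*,a_{i^*})\ge\lambda_\beta\ge\lambda_1$ and contradicting $D(i^*,a_{i^*})<\lambda_1$. Therefore $D(i^*,a_{i^*})=\gamma(i^*,a_{i^*})\le\lambda^*$, and combined with $D(i^*,a_{i^*})\ge\lambda^*$ we get $D(i^*,a_{i^*})=\lambda^*$, so $e(i^*,a_{i^*})$ is an optimal edge; taking this edge as the optimal solution in question gives $j^*=a_{i^*}$.

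I expect the main obstacle to be the second half: showing that the diameter of $G(i^*,a_{i^*})$ is realized precisely by $\gamma$ and equals exactly $\lambda^*$. That step is where every hypothesis is used simultaneously — the strict inequality $\lambda_\gamma<\lambda_1$, the minimality of each of $\lambda_\alpha,\lambda_\beta,\lambda_\delta$, the fact that any graph's diameter is a feasible value (hence $\ge\lambda^*$), and the monotonicity of Observation~\ref{obser:20}. The preliminary steps — the strict bounds $\alpha(i^*,j^*),\beta(i^*,j^*),\delta(i^*,j^*)<\lambda^*$ and the membership of $j^*$ in the relevant intersection — are routine once Observation~\ref{obser:40} is available.
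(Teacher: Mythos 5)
Your proof is correct and takes essentially the same route as the paper's: both arguments hinge on the minimality of $\lambda_1$ over $S_\alpha\cup S_\beta\cup S_\delta$ forcing $D(i^*,a_{i^*})$ to be realized by $\gamma(i^*,a_{i^*})$, together with the monotonicity of $\gamma$ in $j$ and the fact that $\eta=\max\{\alpha,\beta,\delta\}$ is below $\lambda_1$ at $(i^*,a_{i^*})$. The only cosmetic difference is that you first place $j^*$ inside the intersection defining $a_{i^*}$ and pull the bounds back down to $a_{i^*}$ via Observation~\ref{obser:20}, whereas the paper rules out $j<a_{i^*}$ and $j>a_{i^*}$ separately; like the paper's proof, what yours literally establishes is that $e(i^*,a_{i^*})$ is an optimal edge, which is exactly the form in which the lemma is used afterwards.
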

\begin{proof}
For any pair $(i,j)$ with $1\leq i\leq j\leq n$, let $\eta(i,j)=\max\{\alpha(i,j),\beta(i,j),\delta(i,j)\}$.
By Observation~\ref{obser:10}, $D(i,j)=\max\{\gamma(i,j),\eta(i,j)\}$.

We first prove the following {\em claim}: If $\gamma(i^*,a_{i^*})\geq \eta(i^*,a_{i^*})$,
then $j^*=a_{i^*}$ (e.g., see Fig.~\ref{fig:optedge}).

\begin{figure}[t]
\begin{minipage}[t]{\textwidth}
\begin{center}
\includegraphics[height=1.2in]{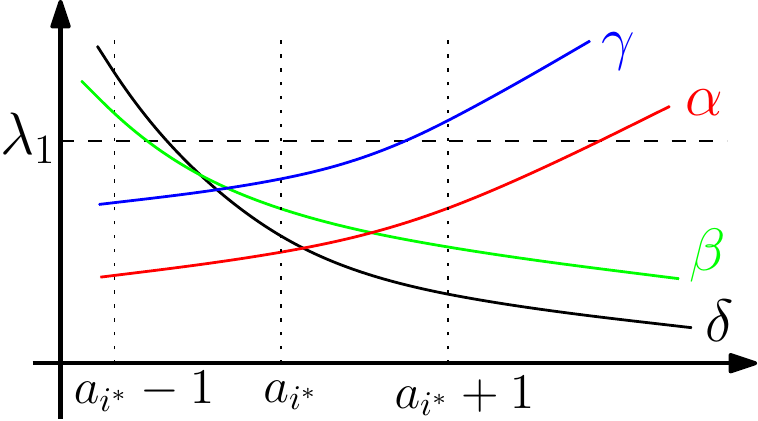}
\caption{\footnotesize Illustrating $f(i^*,j)$ as $j$ changes for
$f\in \{\alpha,\beta,\gamma,\delta\}$. The three indices $a_{i^*}-1$,
$a_{i^*}$, and $a_{i^*}+1$ are shown. }
\label{fig:optedge}
\end{center}
\end{minipage}
\vspace{-0.15in}
\end{figure}

On the one hand, consider any $j\in [i^*,a_{i^*}-1]$. By the definition of $a_{i^*}$,
$\eta(i^*,j)\geq \lambda_1$. Since $\lambda_1>\lambda_{\gamma}=\lambda^*$, $\eta(i^*,j)>\lambda^*$.
By Observation~\ref{obser:10}, $D(i^*,j)\geq \eta(i^*,j)>\lambda^*$.
Hence, $j$ cannot be $j^*$ since otherwise $D(i^*,j)$ would be equal
to $\lambda^*$, incurring contradiction.

On the other hand, consider any $j\in [a^*+1,n]$. By
Observation~\ref{obser:10}, $D(i^*,j)\geq \gamma(i^*,j)$.  By
Observation~\ref{obser:20}, $\gamma(i^*,j)\geq \gamma(i^*,a_{i^*})$. Hence, $D(i^*,j)\geq \gamma(i^*,a_{i^*})$. Further, since
$\gamma(i^*,a_{i^*})\geq \eta(i^*,a_{i^*})$ (the claim hypothesis), we have
$D(i^*,a_{i^*})=\max\{\gamma(i^*,a_{i^*}),\eta(i^*,a_{i^*})\}=\gamma(i^*,a_{i^*})$. Therefore, we obtain $D(i^*,a_{i^*})\leq D(i^*,j)$. This implies that $j^*=a_{i^*}$. Hence, the claim follows.

We proceed to prove the lemma. Based on the above claim, it is
sufficient to show that $\gamma(i^*,a_{i^*})\geq \eta(i^*,a_{i^*})$,
as follows.

Assume to the contrary that $\gamma(i^*,a_{i^*}) < \eta(i^*,a_{i^*})$.
Then, $D(i^*,a_{i^*})=\eta(i^*,a_{i^*})$. According to the definition
of $a_{i^*}$, $\eta(i^*,a_{i^*})<\lambda_1$. Hence,
$D(i^*,a_{i^*})<\lambda_1$. Let $\lambda'=D(i^*,a_{i^*})$. Since
$\lambda'=\eta(i^*,a_{i^*})$, $\lambda'$ is a value in $S_{\alpha}\cup
S_{\beta}\cup S_{\delta}$. Since $\lambda'=D(i^*,a_{i^*})$, $\lambda'$
is a feasible value (i.e., $\lambda'\geq \lambda^*$). Recall that $\lambda_1$ is the smallest feasible value of
$S_{\alpha}\cup S_{\beta}\cup S_{\delta}$. Thus, we obtain
contradiction since $\lambda'<\lambda_1$.

Therefore, $\gamma(i^*,a_{i^*})\geq \eta(i^*,a_{i^*})$ holds. The lemma thus follows.
\end{proof}

Lemma~\ref{lem:50} is crucial because it immediately suggests the following algorithm.

We first compute
the indices $I'_{i}(\alpha), I'_{i}(\beta), I'_{i}(\delta)$ for
$i=1,\ldots, n$. This can be done in $O(n)$ time using the similar
algorithms as those for computing $I_i(\alpha),I_i(\beta),I_i(\delta)$
in Section~\ref{sec:computeI}. In fact, here we can even afford $O(n\log n)$ time to compute these indices. Hence, for simplicity, we can use the similar algorithm as that for computing $I_i(\delta)$ in Section~\ref{sec:computeI} instead of the one in Lemma~\ref{lem:20}. The total time is $O(n\log n)$.

Next, for each $i\in [1,n]$, if $[1,I'_{i}(\alpha)]\cap
[I'_{i}(\beta),n]\cap [I'_{i}(\delta),n]\neq \emptyset$, then we
compute $a_i$, i.e., the smallest index in the above intersection.
Let $\calI$ be the set of index $i$ such that the above
interval  intersection for $i$ is not empty. Lemma~\ref{lem:50} leads to the following observation.

\begin{observation}\label{obser:30}
If $\lambda_{\gamma}<\lambda_1$, then $\lambda^*$ is the smallest feasible value of the set $\{\gamma(i,a_i)\ |\ i\in \calI\}$.
\end{observation}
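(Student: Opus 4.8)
The plan is to establish Observation~\ref{obser:30} in two directions, exploiting the information contained in the hypothesis $\lambda_{\gamma}<\lambda_1$. First I would note that every element $\gamma(i,a_i)$ with $i\in\calI$ belongs to $S_\gamma\subseteq S$, so it suffices to prove (a) that $\lambda^*$ actually equals $\gamma(i,a_i)$ for some $i\in\calI$, and (b) that no $i\in\calI$ yields a feasible value strictly smaller than $\lambda^*$ --- the latter being automatic since $\lambda^*$ is by definition the smallest feasible value of all of $S$, hence of any subset of $S$. So the entire content is part (a): exhibiting an index in $\calI$ whose associated value is exactly $\lambda^*$.

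For part (a), I would start from an optimal edge $e(i^*,j^*)$. Since $\lambda_{\gamma}<\lambda_1$, Observation~\ref{obser:40} gives $\lambda^*=\gamma(i^*,j^*)$, and the discussion just before Lemma~\ref{lem:50} shows the intersection $[1,I'_{i^*}(\alpha)]\cap[I'_{i^*}(\beta),n]\cap[I'_{i^*}(\delta),n]$ is nonempty (because $\lambda_1>\lambda^*$ and the $I'$'s are defined with the strict inequality $<\lambda_1$, which $\alpha(i^*,j^*),\beta(i^*,j^*),\delta(i^*,j^*)\le\lambda^*<\lambda_1$ all satisfy, placing $j^*$ inside each of the three intervals). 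Hence $i^*\in\calI$, and $a_{i^*}$ is well defined. Then Lemma~\ref{lem:50} yields $j^*=a_{i^*}$, so $\gamma(i^*,a_{i^*})=\gamma(i^*,j^*)=\lambda^*$. This puts $\lambda^*$ into the set $\{\gamma(i,a_i)\mid i\in\calI\}$ as a feasible value, and combined with the trivial minimality observation above, $\lambda^*$ is precisely the smallest feasible value of that set.

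The main obstacle --- really the only nontrivial point --- is justifying that $i^*\in\calI$, i.e.\ that $j^*$ lies in all three $I'$-intervals for $i^*$. The subtlety is the strict-versus-nonstrict inequality: $I'_{i^*}(\alpha)$ is the largest $j$ with $\alpha(i^*,j)<\lambda_1$ (strict), and we only know $\alpha(i^*,j^*)\le\lambda^*$. This is fine because $\lambda^*<\lambda_1$ makes the inequality strict after all, so $j^*\le I'_{i^*}(\alpha)$; symmetrically $\beta(i^*,n)\le\beta(i^*,j^*)\le\lambda^*<\lambda_1$ guarantees $I'_{i^*}(\beta)\ne\infty$ and $j^*\ge I'_{i^*}(\beta)$, and likewise for $\delta$. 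Once that is spelled out, everything else is a direct citation of Observation~\ref{obser:40} and Lemma~\ref{lem:50}, so the proof is short.

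\begin{proof}
Since $\{\gamma(i,a_i)\mid i\in\calI\}\subseteq S_\gamma\subseteq S$ and $\lambda^*$ is the smallest feasible value of $S$, it suffices to show that $\lambda^*$ itself belongs to $\{\gamma(i,a_i)\mid i\in\calI\}$; then it is automatically the smallest feasible value of this subset.

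Let $e(i^*,j^*)$ be an optimal edge. Then $\alpha(i^*,j^*)$, $\beta(i^*,j^*)$, and $\delta(i^*,j^*)$ are all at most $D(i^*,j^*)=\lambda^*$, and since $\lambda^*=\lambda_{\gamma}<\lambda_1$, each of these three values is strictly less than $\lambda_1$. By the definitions of $I'_{i^*}(\alpha),I'_{i^*}(\beta),I'_{i^*}(\delta)$, this implies $j^*\leq I'_{i^*}(\alpha)$, $I'_{i^*}(\beta)\neq \infty$ with $j^*\geq I'_{i^*}(\beta)$, and $I'_{i^*}(\delta)\neq\infty$ with $j^*\geq I'_{i^*}(\delta)$. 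Hence $j^*\in [1,I'_{i^*}(\alpha)]\cap [I'_{i^*}(\beta),n]\cap [I'_{i^*}(\delta),n]$, so this intersection is nonempty and therefore $i^*\in\calI$; in particular $a_{i^*}$ is defined as the smallest index in this intersection. By Lemma~\ref{lem:50}, $j^*=a_{i^*}$, and by Observation~\ref{obser:40}, $\lambda^*=\gamma(i^*,j^*)=\gamma(i^*,a_{i^*})$. Thus $\lambda^*$ is a feasible value in $\{\gamma(i,a_i)\mid i\in\calI\}$, and by the first paragraph it is the smallest feasible value of this set.
\end{proof}
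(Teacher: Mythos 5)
Your proof is correct and follows essentially the same route as the paper's: invoke Lemma~\ref{lem:50} to conclude that $e(i^*,a_{i^*})$ is an optimal edge with $i^*\in\calI$, then Observation~\ref{obser:40} to get $\lambda^*=\gamma(i^*,a_{i^*})$, and finish by minimality of $\lambda^*$ among feasible values. The only difference is that you explicitly verify $i^*\in\calI$ via the strict inequalities $\alpha(i^*,j^*),\beta(i^*,j^*),\delta(i^*,j^*)\le\lambda^*<\lambda_1$, a point the paper dismisses as ``an easy observation'' in the text preceding Lemma~\ref{lem:50}; your added detail is accurate.
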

\begin{proof}
By Lemma~\ref{lem:50}, one of the edges of $\{e(i,a_i)\ |\ i\in \calI\}$ is an optimal edge. By Observation~\ref{obser:40}, $\lambda^*$ is in $\{\gamma(i,a_i)\ |\ i\in \calI\}$. Thus, $\lambda^*$ is the smallest feasible value in $\{\gamma(i,a_i)\ |\ i\in \calI\}$.
\end{proof}

We can further obtain the following ``stronger'' result, although Observation~\ref{obser:30} is sufficient for our algorithm.

\begin{lemma}
If $\lambda_{\gamma}<\lambda_1$, then $\lambda^*=\min_{i\in \calI}\gamma(i,a_i)$.
\end{lemma}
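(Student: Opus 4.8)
The plan is to strengthen Observation~\ref{obser:30} from ``$\lambda^*$ is the smallest feasible value of $\{\gamma(i,a_i)\mid i\in\calI\}$'' to the exact equality $\lambda^*=\min_{i\in\calI}\gamma(i,a_i)$. By Observation~\ref{obser:30} it suffices to show that every value $\gamma(i,a_i)$ with $i\in\calI$ is feasible, i.e.\ $\gamma(i,a_i)\geq\lambda^*$ for all such $i$, since then the smallest feasible value in the set equals the overall minimum of the set. So the core task is: fix $i\in\calI$ and prove $\gamma(i,a_i)\geq\lambda^*$.

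First I would unpack what $i\in\calI$ gives us: the intersection $[1,I'_i(\alpha)]\cap[I'_i(\beta),n]\cap[I'_i(\delta),n]$ is nonempty, and $a_i$ is its smallest element, with $i\le a_i$ as noted before Lemma~\ref{lem:50}. The key point is that the exact same argument used inside the proof of Lemma~\ref{lem:50} to derive a contradiction from $\gamma(i^*,a_{i^*})<\eta(i^*,a_{i^*})$ works verbatim here with $i$ in place of $i^*$: if $\gamma(i,a_i)<\eta(i,a_i)$, then $D(i,a_i)=\eta(i,a_i)$, and by the definition of $a_i$ (each of $\alpha,\beta,\delta$ is within the ``$<\lambda_1$'' threshold at $j=a_i$) we get $\eta(i,a_i)<\lambda_1$, hence $D(i,a_i)<\lambda_1$. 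But $D(i,a_i)=\eta(i,a_i)$ is then a feasible value (it is a realized diameter, so $\ge\lambda^*$) lying in $S_\alpha\cup S_\beta\cup S_\delta$ and strictly below $\lambda_1$, contradicting that $\lambda_1$ is the smallest feasible value of $S_\alpha\cup S_\beta\cup S_\delta$. Therefore $\gamma(i,a_i)\geq\eta(i,a_i)$, which gives $D(i,a_i)=\gamma(i,a_i)$, and since $D(i,a_i)$ is feasible we conclude $\gamma(i,a_i)=D(i,a_i)\geq\lambda^*$.

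Combining the two parts: every element of $\{\gamma(i,a_i)\mid i\in\calI\}$ is $\geq\lambda^*$, so the minimum of the set is $\geq\lambda^*$; and by Observation~\ref{obser:30} (equivalently by Lemma~\ref{lem:50} and Observation~\ref{obser:40}) the set contains $\lambda^*$ itself, namely $\gamma(i^*,a_{i^*})=\lambda^*$. Hence $\min_{i\in\calI}\gamma(i,a_i)=\lambda^*$, as claimed. I do not expect any serious obstacle here — the whole content is essentially a repackaging of the contradiction argument already present in Lemma~\ref{lem:50}, now applied uniformly to all $i\in\calI$ rather than only to the optimal $i^*$; the one thing to be careful about is invoking the definition of $a_i$ correctly (that all three of $\alpha(i,a_i),\beta(i,a_i),\delta(i,a_i)$ are strictly below $\lambda_1$, which is exactly what membership of $a_i$ in the triple intersection of the primed intervals encodes), and noting that realized diameters $D(i,a_i)$ are automatically feasible values.
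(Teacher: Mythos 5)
Your proposal is correct and follows essentially the same route as the paper: both prove, via the same contradiction with the minimality of $\lambda_1$ over $S_\alpha\cup S_\beta\cup S_\delta$, that $\gamma(i,a_i)\geq\eta(i,a_i)$ and hence $D(i,a_i)=\gamma(i,a_i)$ is feasible for every $i\in\calI$, then combine this with Observation~\ref{obser:30}. The only cosmetic difference is that the paper states the claim with a strict inequality $\eta(i,a_i)<\gamma(i,a_i)$, while your non-strict version suffices equally well.
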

\begin{proof}
For any pair $(i,j)$ with $1\leq i\leq j\leq n$, let $\eta(i,j)=\max\{\alpha(i,j),\beta(i,j),\delta(i,j)\}$.
By Observation~\ref{obser:10}, $D(i,j)=\max\{\gamma(i,j),\eta(i,j)\}$.

We first prove the following {\em claim:} For any $i\in \calI$, $\eta(i,a_i)<\gamma(i,a_i)$. Indeed, assume to the contrary that $\eta(i,a_i)\geq \gamma(i,a_i)$ for some $i\in \calI$. Then, $D(i,a_i)=\eta(i,a_i)$. By the definition of $a_i$, $\eta(i,a_i)<\lambda_1$. Hence, $D(i,a_i)<\lambda_1$. Let $\lambda'=D(i,a_i)$. Note that $\lambda'$ is a feasible value that is in $S_{\alpha}\cup S_{\beta}\cup S_{\delta}$. However, $\lambda'<\lambda_1$ contradicts with that $\lambda_1$ is the smallest feasible value in $S_{\alpha}\cup S_{\beta}\cup S_{\delta}$.

Next, we prove the lemma by using the above claim. For each $i\in \calI$, by the above clam, $D(i,a_i)=\gamma(i,a_i)$, and thus, $\gamma(i,a_i)$ is a feasible value.
By Lemma~\ref{lem:50}, we know that $\lambda^*$ is in $\{\gamma(i,a_i)\ |\ i\in \calI\}$. Therefore, $\lambda^*$ is the smallest value in $\{\gamma(i,a_i)\ |\ i\in \calI\}$. The lemma thus follows.
\end{proof}

Observation~\ref{obser:30} essentially reduces the search space for $\lambda^*$ to $\{\gamma(i,a_i)\ |\ i\in \calI\}$, which has at most $O(n)$ values. It is tempting to first explicitly compute the set and then find $\lambda^*$ from the set. However, again, due to the $\gamma$-computation difficulty, we are not able to compute the set in $O(n\log n)$ time. Alternatively, we use the following approach  to compute $\lambda^*$.

\subsection{Finding $\lambda^*$ in the Set $\{\gamma(i,a_i)\ |\ i\in \calI\}$}
\label{sec:feasibility}

Recall that according to Observation~\ref{obser:gamma}, $\gamma(i,j)=\max_{i\leq k\leq l\leq j}d_{C(i,j)}(k,l)$, with
$d_{C(i,j)}(k,l)=\min\{d_P(k,l),d^1_{C(i,j)}(k,l)\}$ and
$d^1_{C(i,j)}(k,l)=d_P(i,k)+|v_iv_j|+d_P(l,j)$.
Hence, $\gamma(i,j)$ is equal to $d_P(k,l)$ or $d^1_{C(i,j)}(k,l)$ for some $k\leq l$. Therefore, by Observation~\ref{obser:30}, there exists $i\in \calI$ such that $\lambda^*$ is equal to $d_P(k,l)$ or $d^1_{C(i,j)}(k,l)$ for some $k$ and $l$ with $i\leq k\leq l\leq a_i$.

Let $S_p=\{d_P(k,l)\ |\ 1\leq k\leq l\leq n\}$ and $S_c=\{d^1_{C(i,j)}(k,l)\ |\ i\leq k\leq l\leq a_i, i\in \calI\}$. Based on our above discussion, $\lambda^*$ is in $S_p\cup S_c$. Further, $\lambda^*$ is the smallest feasible value in $S_p\cup S_c$.

Let $\lambda_p$ be the smallest feasible value of $S_p$ and let $\lambda_c$ be the smallest feasible value of $S_c$. Hence, $\lambda^*=\min\{\lambda_p, \lambda_c\}$. By using the technique of searching sorted-matrices \cite{ref:FredericksonGe84,ref:FredericksonFi83}, the following lemma computes $\lambda_p$ in $O(n\log n)$ time.

\begin{lemma}\label{lem:70}
$\lambda_p$ can be computed in $O(n\log n)$ time.
\end{lemma}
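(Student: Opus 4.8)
The plan is to reduce the computation of $\lambda_p$ to searching for the smallest feasible element of an implicitly-defined $n\times n$ sorted matrix, and then to run the sorted-matrix searching technique of Frederickson and Johnson~\cite{ref:FredericksonGe84,ref:FredericksonFi83} with our $O(n)$-time decision algorithm (Theorem~\ref{theo:decision}) as the monotone feasibility oracle, in exactly the same spirit as the earlier computation of $\lambda_\beta$.

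Concretely, I would first reuse the prefix-sum array $A[1\cdots n]$ from Lemma~\ref{lem:preprocess}, so that $d_P(k,l)=A[l]-A[k]$ for all $k\le l$ and $A[1]\le A[2]\le \cdots\le A[n]$ (the edge lengths are nonnegative). Define an $n\times n$ matrix $M$ by $M[i,j]=A[j]-A[n+1-i]$. Since $A$ is nondecreasing, $M[i,j]\le M[i,j+1]$ within each row and $M[i,j]\le M[i+1,j]$ within each column, so $M$ is sorted in ascending order along rows and columns, and each entry can be evaluated in $O(1)$ time. Writing $k=n+1-i$ and $l=j$, the entries of $M$ are exactly $\{A[l]-A[k]\ |\ k,l\in[1,n]\}$: those with $k\le l$ are precisely the elements of $S_p$, and those with $k>l$ are at most $0$. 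Because $0\in S_p$ (for instance $0=d_P(1,1)$) and every strictly negative value is smaller than $\lambda^*\ge 0$ and hence infeasible, the smallest feasible entry of $M$ coincides with the smallest feasible value of $S_p$, i.e. with $\lambda_p$. (Note $S_p$ does contain a feasible value, since $d_P(1,n)$ is the diameter of $P$ itself, which equals $D(1,1)\ge\lambda^*$.)

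I would then apply the sorted-matrix searching procedure to $M$ with our decision algorithm as the feasibility test: it locates the smallest feasible element of $M$ by calling the decision algorithm $O(\log n)$ times and evaluating $O(n)$ matrix entries. Each decision call costs $O(n)$ time and each entry evaluation costs $O(1)$ time, so the total running time is $O(n\log n)$, and the returned value is $\lambda_p$.

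I do not expect a real obstacle here; the only delicate point is the bookkeeping around the ``invalid'' entries of $M$, namely choosing the orientation $M[i,j]=A[j]-A[n+1-i]$ (rather than the naively-indexed $A[l]-A[k]$) so that the matrix is genuinely sorted, and verifying that the spurious entries with $k>l$ are all $\le 0$ and therefore can never be returned as the answer — which is exactly where the bound $\lambda^*\ge 0$ is invoked. Everything else is a direct instance of the already-available decision algorithm together with standard sorted-matrix machinery, mirroring the argument used for $\lambda_\beta$.
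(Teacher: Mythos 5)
Your proposal is correct and follows essentially the same route as the paper: embed $S_p$ into an implicitly defined $n\times n$ sorted matrix with $O(1)$-time entry evaluation (the paper uses $M[i,j]=d_P(i,j)$ for $j\geq i$ and $0$ otherwise, while you re-index rows and allow nonpositive filler entries, a cosmetic difference), then apply the sorted-matrix searching technique with the $O(n)$-time decision algorithm as the oracle. Your extra care in arguing that the spurious entries can never be returned is a valid and slightly more explicit treatment of a point the paper handles implicitly with its zero filler.
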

\begin{proof}
We define an $n\times n$ matrix $M[1\cdots n;1\cdots n]$: For each
$1\leq i\leq n$ and $1\leq j\leq n$, define $M[i,j]=d_P(i,j)$ if
$j\geq i$ and $M[i,j]=0$ otherwise. It is easy to verify that each row of $M$ is sorted in ascending order from the left to right and each column is sorted in ascending order from bottom to top. Consequently, by using the sorted-matrix searching technique \cite{ref:FredericksonGe84,ref:FredericksonFi83}, $\lambda_p$ can be found by calling our decision algorithm $O(\log n)$ times and evaluating $O(n)$ elements of $M$. Clearly, given any $i$ and $j$, we can evaluate $M[i,j]$ in constant time. Hence, $\lambda_p$ can be computed in $O(n\log n)$ time.
\end{proof}

Recall that $\lambda^*=\min\{\lambda_p, \lambda_c\}$.
In the case $\lambda_p\leq \lambda_c$, $\lambda^*=\lambda_p$ and
we are done with computing $\lambda^*$.
In the following, we assume
$\lambda_p> \lambda_c$. Thus, $\lambda^*=\lambda_c$.
With the help of the information implied by $\lambda_p> \lambda_c$,
we will compute $\lambda^*$ in $O(n\log n)$ time.
The details are given below.


For any $j\in [1,n]$, let $g'_j$ denote the largest index $k\in [j,n]$
such that the subpath length $d_P(j,k)$ is {\em strictly smaller} than $\lambda_p$. Note that
the definition of $g_j'$ is similar to $g_j$ defined in
Section~\ref{sec:feasibility}  except that we change ``$\leq \lambda$'' to
``$<\lambda_p$''.

For each $k\in [1,n]$, let $h'_k$
denote the smallest index $j\in [1,k]$ with $g'_j\geq k$. Let $\calI'$ be
the subset of $i\in \calI'$ such that $i\leq h'_{a_i}-1$. Hence, for each $i\in
\calI'$ and each $j\in [i,h'_{a_i}-1]$, $g'_j< a_i$ and thus $g'_j+1\leq
a_i$.

For each $i\in \calI'$, define
$d^1_{\max}(i,a_i)=\max_{j\in [i,h'_{a_i}-1]}d^1_{C(i,j)}(j,g'_j+1)$. The
following lemma gives a way to determine $\lambda^*$.

\begin{lemma}\label{lem:80}
If $\lambda_{\gamma}<\lambda_1$ and $\lambda_c<\lambda_p$,
then $\lambda^*=d^1_{\max}(i,a_i)$ for some
$i\in \calI'$.
\end{lemma}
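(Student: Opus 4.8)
The plan is to show two inclusions of feasibility, mirroring the structure of Lemma~\ref{lem:310} and Observation~\ref{obser:110}, but now with the sharp threshold $\lambda_c$ playing the role that $\lambda$ played there. First I would recall that, by the discussion preceding the lemma, $\lambda^* = \lambda_c$ is the smallest feasible value of $S_c = \{d^1_{C(i,j)}(k,l) : i\le k\le l\le a_i,\ i\in\calI\}$. The key structural fact I want to extract is the following analogue of Lemma~\ref{lem:310}: for any $i\in\calI$, $\gamma(i,a_i)\le\lambda^*$ if and only if for each $j\in[i,a_i-1]$ either $g'_j\ge a_i$ or $d^1_{C(i,a_i)}(j,g'_j+1)\le\lambda^*$. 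The forward direction uses that $d_P(j,g'_j+1)\ge\lambda_p>\lambda^*$ (by the definition of $g'_j$ and the assumption $\lambda_c<\lambda_p$), so $d_{C(i,a_i)}(j,g'_j+1)\le\gamma(i,a_i)\le\lambda^*$ forces the path through the new edge to be the short one; the reverse direction is the same case analysis on $(k,l)$ as in Lemma~\ref{lem:310}, splitting on whether $l\le g'_k$.

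Next I would invoke Observation~\ref{obser:30}: $\lambda^*$ is the smallest feasible value of $\{\gamma(i,a_i) : i\in\calI\}$, so there is some $i_0\in\calI$ with $\gamma(i_0,a_{i_0})$ feasible, hence $\gamma(i_0,a_{i_0})\ge\lambda^*$, and in fact $\lambda^*=\min_{i\in\calI}\gamma(i,a_i)$ by the earlier ``stronger'' lemma, so $\gamma(i_0,a_{i_0})=\lambda^*$ for the minimizing $i_0$. Applying the structural fact above with $\lambda^*$ as the threshold (using $\gamma(i_0,a_{i_0})\le\lambda^*$), we get that $i_0\in\calI'$: indeed for every $j\in[i_0,h'_{a_{i_0}}-1]$ we have $g'_j<a_{i_0}$, so $d^1_{C(i_0,a_{i_0})}(j,g'_j+1)\le\lambda^*$, which shows $d^1_{\max}(i_0,a_{i_0})\le\lambda^*$. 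For the reverse inequality, I observe that each $d^1_{C(i_0,a_{i_0})}(j,g'_j+1)$ is an element of $S_c$ (it has the form $d^1_{C(i,j')}(k,l)$ with $i=i_0\in\calI$, $k=j$, $l=g'_j+1$, and $i_0\le k\le l\le a_{i_0}$), hence if it were a feasible value it would be $\ge\lambda^*=\lambda_c$; since we already showed $\le\lambda^*$, each such value that is feasible equals $\lambda^*$. To finish I need at least one of them to be feasible — and here the natural argument is that $\gamma(i_0,a_{i_0})=\lambda^*$ is itself realized as $d_{C(i_0,a_{i_0})}(k,l)$ for some $k\le l$, and by the unimodality/worst-pair analysis this worst pair can be taken of the form $(j,g'_j+1)$, so $d^1_{\max}(i_0,a_{i_0})=\gamma(i_0,a_{i_0})=\lambda^*$.

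The main obstacle I anticipate is exactly that last point: establishing that $d^1_{\max}(i_0,a_{i_0})$ actually attains $\lambda^*$ rather than merely upper-bounding it. The clean way is a monotonicity argument in the spirit of Observation~\ref{obser:110}: rewrite $d^1_{C(i_0,a_{i_0})}(j,g'_j+1)=|C(i_0,a_{i_0})|-d_P(i_0,g'_j+1)$, note that as $j$ ranges over $[i_0,h'_{a_{i_0}}-1]$ the index $g'_j+1$ ranges over a set of values so that the maximum of $d^1_{C(i_0,a_{i_0})}(j,g'_j+1)$ equals $|C(i_0,a_{i_0})|$ minus a \emph{range-minimum} of $d_P(i_0,\cdot)$, and compare this to the worst pair $(k,l)$ defining $\gamma(i_0,a_{i_0})$. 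Since for the worst pair either $d_P(k,l)=\gamma(i_0,a_{i_0})$ (impossible here because that would put $\lambda^*$ in $S_p$ and contradict $\lambda_p>\lambda_c$, at least after checking $\lambda_p$ is truly strictly larger) or $d^1_{C(i_0,a_{i_0})}(k,l)=\gamma(i_0,a_{i_0})$, we are in the second case; and then enlarging $l$ down to $g'_k+1$ or shrinking it only increases/preserves $d^1$, so the extremal pair may be normalized to $(j,g'_j+1)$ form. Writing this normalization carefully, and handling the boundary case $i_0=h'_{a_{i_0}}$ (where $\calI'$ might not contain $i_0$ and one would instead argue $\gamma(i_0,a_{i_0})$ comes from $S_p$, again contradicting $\lambda_p>\lambda_c$), is where the real care is needed; everything else is a transcription of the decision-problem lemmas with $<\lambda_p$ in place of $\le\lambda$.
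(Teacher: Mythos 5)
Your proposal is correct and takes essentially the same route as the paper: both proofs hinge on taking the worst pair $(k,l)$ with $d_{C(i,a_i)}(k,l)=\gamma(i,a_i)=\lambda^*$, using the minimality of $\lambda_p$ to force $\lambda^*=d^1_{C(i,a_i)}(k,l)$ and $d_P(k,l)\geq\lambda_p$ (hence $l\geq g'_k+1$ and $i\in\calI'$, which also disposes of the boundary case $h'_{a_i}\leq i$ you worried about), and bounding $d^1_{C(i,a_i)}(j,g'_j+1)\leq\lambda^*$ for all relevant $j$ via the diameter of $G(i,a_i)$. One small correction: the reverse direction of your claimed ``iff'' analogue of Lemma~\ref{lem:310} at threshold $\lambda^*$ is false, since $l\leq g'_k$ only yields $d_P(k,l)<\lambda_p$ rather than $d_P(k,l)\leq\lambda^*$; fortunately you only use the forward direction, so the argument is unaffected.
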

\begin{proof}
Since $\lambda_{\gamma}<\lambda_1$ and $\lambda_c<\lambda_p$, by our
above discussions, $\lambda^*=\lambda_c$.

By Observation~\ref{obser:30}, $\lambda^*$ is the diameter of the
graph $G(i,a_i)$ for some $i\in \calI$ and $\lambda^*$ is equal to the length of
the shortest path of two vertices $v_k$ and $v_l$ in $C(i,a_i)$ for
$i\leq k\leq l\leq a_i$, i.e.,
$\lambda^*=d_{C(i,a_i)}(k,l)=\min\{d_P(k,l),d^1_{C(i,j)}(k,l)\}$. See
Fig.~\ref{fig:kl}.
Since $\lambda^*=\lambda_c$, we further know that
$\lambda^*=d^1_{C(i,j)}(k,l)$ and $d^1_{C(i,j)}(k,l)\leq d_P(k,l)$.
In fact, $d^1_{C(i,j)}(k,l)< d_P(k,l)$, since otherwise if
$d^1_{C(i,j)}(k,l)= d_P(k,l)$, then $\lambda^* = d_P(k,l)$ would be in
the set $S_p$, contradicting with that $\lambda_p$ is the smallest
feasible value in $S_p$ and $\lambda^*<\lambda_p$.

\begin{figure}[t]
\begin{minipage}[t]{\textwidth}
\begin{center}
\includegraphics[height=1.0in]{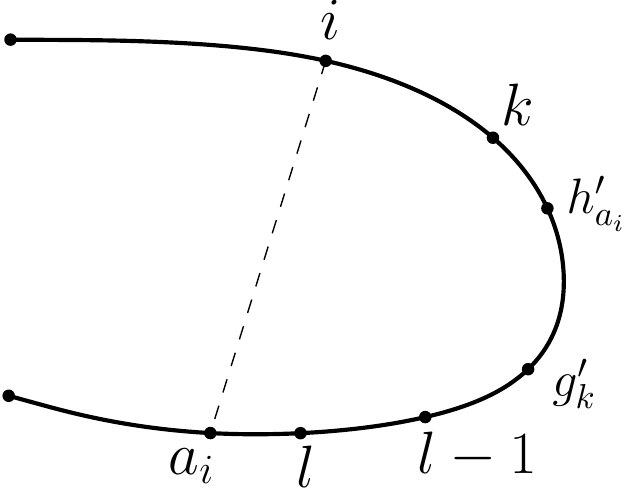}
\caption{\footnotesize Illustrating the graph $G(i,a_i)$ whose
diameter is $\lambda^*$ and $\lambda^*=d_{C(i,a_i)}(k,l)$.}
\label{fig:kl}
\end{center}
\end{minipage}
\vspace{-0.15in}
\end{figure}

For simplicity of discussion, we assume $|v_lv_{l-1}|>0$ (since otherwise
we can keep updating $l$ to $l-1$ until we find $|v_lv_{l-1}|>0$; note
that such an $l$ will eventually be found before we reach $k$ since
$0\leq\lambda^*=d^1_{C(i,j)}(k,l)< d_P(k,l)$).

We prove the following {\em claim:} $d_P(k,l-1)< \lambda_p \leq d_P(k,l)$.

\begin{itemize}
\item
On the one hand, since $\lambda^*=d^1_{C(i,a_i)}(k,l)$ and $|v_lv_{l-1}|>0$, we
obtain that $\lambda^*<d^1_{C(i,a_i)}(k,l-1)$. Since $\lambda^*$ is the
diameter in the graph $G(i,a_i)$,
$d_{G(i,a_i)}(k,l-1)=d_{C(i,a_i)}(k,l-1)\leq \lambda^*$. Further, because
$d_{C(i,a_i)}(k,l-1)=\min\{d_P(k,l-1),d^1_{C(i,a_i)}(k,l-1)\}$ and
$\lambda^*<d^1_{C(i,a_i)}(k,l-1)$, we obtain $d_P(k,l-1)\leq \lambda^*$.
As $\lambda^*=\lambda_c<\lambda_p$, it follows that
$d_P(k,l-1)<\lambda_p$.

\item
On the other hand, assume to the contrary that $\lambda_p>d_P(k,l)$.
Then, since $d_P(k,l)> d^1_{C(i,a_i)}(k,l)=\lambda^*$, $d_P(k,l)$ is a
feasible value. Clearly, $d_P(k,l)$ is in the set $S_p$. However,
$\lambda_p>d_P(k,l)$ contradicts with that $\lambda_p$ is the smallest
feasible value in $S_p$.
\end{itemize}

This proves the claim.
With the claim, we show below
that $\lambda^*=d^1_{\max}(i,a_i)$, which will prove the lemma.

We first show that $i$ is in
$\calI'$, i.e., $i\leq h'_{a_i}-1$. Indeed, since $\lambda_p\leq
d_P(k,l)$ (by the claim),
based on the definition of $g'_k$, it holds that $g'_k<l$ (e.g., see
Fig.~\ref{fig:kl}). Since $l\leq
a_i$, we obtain $g'_k\leq a_i-1$. This implies that $k< h'_{a_i}$ and
thus $k\leq h'_{a_i}-1$. Since $i\leq
k$, $i\leq h'_{a_i}-1$.

It remains to prove $\lambda^*=d^1_{\max}(i,a_i)$.
Indeed, recall that $\lambda^*=d^1_{C(i,a_i)}(k,l)$. Note that the above claim in fact implies that $g_k'=l-1$, and thus, $g_k'+1=l$. Hence, we have $\lambda^*=d^1_{C(i,a_i)}(k,l)= d^1_{C(i,a_i)}(k,g'_k+1)$. Note that $k$ is in $[i,h'_{a_i}-1]$.
Consider any $j\in [i,h'_{a_i}-1]$. To prove $\lambda^*=d^1_{\max}(i,a_i)$,
it is now sufficient to prove $\lambda^*\geq d^1_{C(i,a_i)}(j,g'_j+1)$, as follows.

Recall that $g'_j+1=l\leq a_i$.
Since $\lambda^*$ is the diameter of $G(i,a_i)$,
$d_{G(i,a_i)}(j,g'_j+1)=d_{C(i,a_i)}(j,g'_j+1)\leq \lambda^*$. Recall that
$d_{C(i,a_i)}(j,g'_j+1)=\min\{d_P(j,g'_j+1),d^1_{C(i,a_i)}(j,g'_j+1)\}$.
By the definition of $g'_j$, we know that $d_P(j,g'_j+1)\geq \lambda_p$.
Since $\lambda_p>\lambda^*$, $d_{P}(j,g'_j+1)>\lambda^*$. Hence, it
must be that $\lambda^*\geq d^1_{C(i,a_i)}(j,g'_j+1)$.

This proves that $\lambda^*=d^1_{\max}(i,a_i)$.
The lemma thus follows.
\end{proof}

In light of Lemma~\ref{lem:80}, in the case of
$\lambda_c<\lambda_p$, $\lambda^*=\lambda_c$ is the smallest feasible
value of  $d^1_{\max}(i,a_i)$ for all $i\in \calI'$. Note that the number
of such values  $d^1_{\max}(i,a_i)$  is $O(n)$. Hence, if we can
compute $d^1_{\max}(i,a_i)$ for all $i\in \calI'$, then $\lambda^*$ can be
easily found in additional $O(n\log n)$ time using our decision
algorithm, e.g., by first sorting these values and then doing binary search.

The next lemma gives an algorithm that can compute $d^1_{\max}(i,a_i)$
for all $i\in \calI'$ in $O(n)$ time, with the help of the
range-minima data structure \cite{ref:BenderTh00,ref:HarelFa84}.

\begin{lemma}\label{lem:100}
$d^1_{\max}(i,a_i)$ for all $i\in \calI'$ can be computed in $O(n)$ time.
\end{lemma}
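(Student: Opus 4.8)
The goal is to compute $d^1_{\max}(i,a_i)=\max_{j\in [i,h'_{a_i}-1]}d^1_{C(i,a_i)}(j,g'_j+1)$ for every $i\in\calI'$ in $O(n)$ total time. Recall that for any $i\in\calI'$ and any $j\in [i,h'_{a_i}-1]$ we have $g'_j+1\le a_i$, so the quantity $d^1_{C(i,a_i)}(j,g'_j+1)=d_P(i,j)+|v_iv_{a_i}|+d_P(g'_j+1,a_i)$ is well defined. My first step is to rewrite this expression so that the part depending on $j$ is separated from the part depending on $i$. Using the prefix-sum array $A$ from Lemma~\ref{lem:preprocess}, write $d_P(i,j)=A[j]-A[i]$ and $d_P(g'_j+1,a_i)=A[a_i]-A[g'_j+1]$, so
\[
d^1_{C(i,a_i)}(j,g'_j+1)=\bigl(A[j]-A[g'_j+1]\bigr)+\bigl(A[a_i]-A[i]+|v_iv_{a_i}|\bigr).
\]
The second parenthesis is $|C(i,a_i)|-2A[i]+\text{const}$-free of $j$; more importantly it depends only on $i$ (through $i$ and $a_i$). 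Hence
\[
d^1_{\max}(i,a_i)=\Bigl(A[a_i]-A[i]+|v_iv_{a_i}|\Bigr)+\max_{j\in[i,h'_{a_i}-1]}\bigl(A[j]-A[g'_j+1]\bigr).
\]

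So the plan is: (1) in $O(n)$ time, precompute $g'_j$ for all $j$ (monotone, single scan) and $h'_k$ for all $k$ (monotone, single scan), exactly as in the proof of Lemma~\ref{lem:30}; (2) build an auxiliary array $B'[1\cdots n]$ with $B'[j]=A[j]-A[g'_j+1]$ (set it to $-\infty$ when $g'_j+1>n$, i.e.\ when the term is undefined/irrelevant), and construct a range-maxima data structure on $B'$ in $O(n)$ time~\cite{ref:BenderTh00,ref:HarelFa84}; (3) determine the set $\calI'$ and the values $a_i$ — these are already available from the $O(n\log n)$ computation of the $I'_i(f)$'s and the intersection step, and the test $i\le h'_{a_i}-1$ is an $O(1)$ lookup per $i$; (4) for each $i\in\calI'$, query the range-maxima structure on the interval $[i,h'_{a_i}-1]$ to get $\max_{j}\bigl(A[j]-A[g'_j+1]\bigr)$ in $O(1)$ time, add the $i$-dependent term $A[a_i]-A[i]+|v_iv_{a_i}|$ (computable in $O(1)$ from $A$ and the metric oracle), and output the result as $d^1_{\max}(i,a_i)$. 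Each of the $O(n)$ indices costs $O(1)$, so the total is $O(n)$ after the $O(n)$ preprocessing.

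The one subtlety I would be careful about — and which I expect to be the only real obstacle — is the interval $[i,h'_{a_i}-1]$ being nonempty and lying inside $[1,n]$: this is exactly guaranteed by the definition of $\calI'$ (namely $i\le h'_{a_i}-1$), and for every $j$ in this range the index $g'_j+1$ satisfies $g'_j+1\le a_i\le n$, so the entry $B'[j]$ is a genuine finite value rather than the $-\infty$ sentinel; hence the range-maxima query returns the correct quantity. One should also note that the interval endpoint $h'_{a_i}-1$ is well defined since $h'_{a_i}$ was precomputed. Everything else is bookkeeping: the decomposition of $d^1_{C(i,a_i)}(j,g'_j+1)$ into an $i$-part plus a $j$-part is what makes a single static range-maxima structure suffice across all $i$, avoiding any per-$i$ work beyond a constant. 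I would conclude the proof by noting that combined with the earlier reduction (Lemma~\ref{lem:80}) and an $O(n\log n)$ binary search over these $O(n)$ candidate values using the decision algorithm of Theorem~\ref{theo:decision}, this yields $\lambda^*$ in $O(n\log n)$ time in the remaining case.
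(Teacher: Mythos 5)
Your proposal is correct and is essentially the paper's own argument: your $j$-dependent term $A[j]-A[g'_j+1]$ equals $-d_P(j,g'_j+1)$ and your $i$-dependent term equals $|C(i,a_i)|$, so your decomposition is exactly the paper's identity $d^1_{C(i,a_i)}(j,g'_j+1)=|C(i,a_i)|-d_P(j,g'_j+1)$, with a range-maxima structure in place of the paper's range-minima structure on $B[j]=d_P(j,g'_j+1)$. The remaining steps (monotone scans for $g'_j$ and $h'_k$, determining $\calI'$, and one $O(1)$ query per $i\in\calI'$) coincide with the paper's proof.
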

\begin{proof}
Consider any $i\in \calI'$. For any $j\in [i,h'_{a_i-1}]$, it is easy to
see that $d^1_{C(i,a_i)}(j,g_j+1)=|C(i,a_i)|-d_P(j,g_j+1)$, where
$|C(i,a_i)|$ is the length of the cycle $C(i,a_i)$. Hence,
we can obtain the following,
\begin{equation*}
\begin{split}
d^1_{\max}(i,a_i)  =\max_{j\in [i,h'_{a_i}-1]}d^1_{C(i,j)}(j,g'_j+1)
 & =\max_{j\in [i,h'_{a_i}-1]}\{|C(i,a_i)|-d_P(j,g'_j+1)\}\\
 & =|C(i,a_i)|-\min_{j\in [i,h'_{a_i}-1]}d_P(j,g'_j+1).
\end{split}
\end{equation*}

Define $d_{\min}(i,a_i)= \min_{j\in [i,h'_{a_i}-1]}d_P(j,g'_j+1)$.
By the above discussions we have $d^1_{\max}(i,a_i)= |C(i,a_i)|- d_{\min}(i,a_i)$.
Therefore, computing $d^1_{\max}(i,a_i)$ boils down to computing
$d_{\min}(i,a_i)$. In the following, we compute $d_{\min}(i,a_i)$ for
all $i\in \calI'$ in $O(n)$ time, after which  $d^1_{\max}(i,a_i)$ for
all $i\in \calI'$ can be computed in additional $O(n)$ time.

First of all, we compute $g'_j$ and $h'_j$ for all $j=1,2,\ldots,n$.
This can be easily done in $O(n)$ time due to the monotonicity
properties: $g'_1\leq g'_2\leq \cdots\leq g'_n$ and $h'_1\leq h'_2\leq
\cdots \leq h'_n$. Recall that for each $i\in \calI$, $a_i$ has
already been computed.
Then, we can compute $\calI'$ in $O(n)$ time by checking
whether $i\leq h'_{a_i}-1$ for each $i\in \calI$.

Next we compute an array $B[1\cdots n]$ such that
$B[j]=d_P(j,g'_j+1)$ for each $j\in [1,n]$.
Clearly, the array $B$ can be computed in $O(n)$
time. Then, we build a range-minima data structure on $B$
\cite{ref:BenderTh00,ref:HarelFa84}. The range-minima
data structure can be built in $O(n)$ time such that given any pair
$(i,j)$ with $1\leq i\leq j\leq n$,
the minimum value of the subarray $B[i\cdots j]$ can be
computed in constant time.

Finally, for each $i\in \calI'$, we can compute $d_{\min}(i,a_i)$ in
constant time by querying the range-minima data structure on $B$ with
$(i,h'_{a_i}-1)$.

Therefore, we can compute $d_{\min}(i,a_i)$ for all $i\in \calI'$, and thus
compute $d^1_{\max}(i,a_i)$ for all $i\in \calI'$ in $O(n)$ time.
\end{proof}

In summary, we can compute $\lambda^*$ in $O(n\log n)$ time in the case
$\lambda_{\gamma}<\lambda_1$ and $\lambda_c<\lambda_p$.

Our overall algorithm for computing an optimal solution is summarized
in the proof of Theorem~\ref{theo:optimization}.

\begin{theorem}\label{theo:optimization}
An optimal solution for the optimization problem can be found in $O(n\log n)$ time.
\end{theorem}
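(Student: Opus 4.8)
The plan is to assemble the pieces developed in Sections \ref{sec:decision} and \ref{sec:optimization} into a single case analysis driven by the decomposition $\lambda^*=\min\{\lambda_\alpha,\lambda_\beta,\lambda_\gamma,\lambda_\delta\}$. First I would recall the preprocessing of Lemma~\ref{lem:preprocess} and the $O(n)$-time decision algorithm of Theorem~\ref{theo:decision}; every subsequent step invokes the decision algorithm a bounded number of times, so each such call costs only $O(n)$. Next I would compute $\lambda_\alpha$, $\lambda_\beta$, and $\lambda_\delta$ by the sorted-matrix searching technique of Frederickson and Johnson~\cite{ref:FredericksonGe84,ref:FredericksonFi83}, exactly as spelled out for $\lambda_\beta$ via the matrix $M[i,j]=\beta(i,j)$: each search makes $O(\log n)$ decision-algorithm calls and $O(n)$ matrix evaluations, and since each entry is evaluated in $O(\log n)$ time by Lemma~\ref{lem:preprocess}, this stage runs in $O(n\log n)$. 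Setting $\lambda_1=\min\{\lambda_\alpha,\lambda_\beta,\lambda_\delta\}$, we have $\lambda^*=\min\{\lambda_1,\lambda_\gamma\}$, so if $\lambda_\gamma\ge\lambda_1$ we simply output $\lambda^*=\lambda_1$.

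In the remaining case $\lambda_\gamma<\lambda_1$ I would follow the chain Observation~\ref{obser:40} $\to$ Lemma~\ref{lem:50} $\to$ Observation~\ref{obser:30}: compute the indices $I'_i(\alpha),I'_i(\beta),I'_i(\delta)$ for all $i$ in $O(n\log n)$ time (using the simple incremental algorithm of Section~\ref{sec:computeI}, which we can afford here), form $\calI$ together with the indices $a_i$ in $O(n)$ time, and thereby reduce the search for $\lambda^*$ to the set $\{\gamma(i,a_i)\mid i\in\calI\}$. By Observation~\ref{obser:gamma} this set lies inside $S_p\cup S_c$, so I would then compute $\lambda_p$, the smallest feasible value of $S_p$, in $O(n\log n)$ time by Lemma~\ref{lem:70}. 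If $\lambda_p\le\lambda_c$ we are done with $\lambda^*=\lambda_p$; otherwise $\lambda^*=\lambda_c<\lambda_p$, and by Lemma~\ref{lem:80} it suffices to search among the $O(n)$ values $d^1_{\max}(i,a_i)$ for $i\in\calI'$. By Lemma~\ref{lem:100} all these values are computed in $O(n)$ time with a range-minima structure~\cite{ref:BenderTh00,ref:HarelFa84}; sorting them and binary-searching with the decision algorithm finds $\lambda^*$ in $O(n\log n)$ additional time.

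Having computed $\lambda^*$ in all cases within $O(n\log n)$ time, I would finish by running the decision algorithm once more with $\lambda=\lambda^*$, which by Theorem~\ref{theo:decision} reports in $O(n)$ time a feasible edge $e(i,j)$ realizing diameter at most $\lambda^*$; since $\lambda^*$ is the optimum, $D(i,j)=\lambda^*$ and $e(i,j)$ is an optimal edge. Summing the costs of the stages gives the claimed $O(n\log n)$ bound.

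The main obstacle is not any single step but the bookkeeping that justifies the conditional branching: one must verify that the information gained at each stage (first $\lambda_1$, then $\lambda_p$) is precisely what licenses the reduction in the next stage — e.g.\ that $\lambda_\gamma<\lambda_1$ is exactly the hypothesis needed for Lemma~\ref{lem:50}, and that $\lambda_c<\lambda_p$ is exactly what Lemma~\ref{lem:80} requires — so that the case analysis is genuinely exhaustive and every branch terminates with the correct value of $\lambda^*$. The underlying conceptual difficulty, the $\gamma$-computation difficulty flagged in Section~\ref{sec:pre}, is what forces this multi-stage structure rather than a direct sorted-matrix search on $S_\gamma$; the proof's job is to show that the staged approach sidesteps it without ever losing $\lambda^*$ from the shrinking search space.
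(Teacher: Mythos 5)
Your proposal is, step for step, the paper's own proof: sorted-matrix search for $\lambda_\alpha,\lambda_\beta,\lambda_\delta$; the chain Observation~\ref{obser:40} $\to$ Lemma~\ref{lem:50} $\to$ Observation~\ref{obser:30} reducing the search to $\{\gamma(i,a_i)\mid i\in\calI\}$; the split into $S_p$ and $S_c$ handled by Lemmas~\ref{lem:70}, \ref{lem:80}, and \ref{lem:100}; and a final run of the decision algorithm with $\lambda=\lambda^*$ to extract an optimal edge.

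The one point to tighten is the control flow. You phrase the algorithm as ``if $\lambda_\gamma\ge\lambda_1$ output $\lambda_1$, otherwise \dots'' and later ``if $\lambda_p\le\lambda_c$ output $\lambda_p$, otherwise \dots'', but $\lambda_\gamma$ and $\lambda_c$ are never computed --- computing them is precisely the $\gamma$-computation difficulty --- so neither branch condition can actually be tested. The paper's resolution is to run every stage unconditionally and return $\min\{\lambda_1,\lambda_p,\lambda'\}$ at the end, where $\lambda'$ is the smallest feasible value among the $d^1_{\max}(i,a_i)$ for $i\in\calI'$. This is correct because each of the three candidates is by construction a feasible value and hence at least $\lambda^*$, while your case analysis shows that in every case at least one of them equals $\lambda^*$. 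So your correctness argument is exactly the right one; only the branching has to be replaced by ``compute all candidates and take the minimum.''
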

\begin{proof}
First, we compute $\lambda_{\alpha}$, $\lambda_{\beta}$, and
$\lambda_{\delta}$, in $O(n\log n)$ time by using our decision
algorithm and the sorted-matrix searching techniques. Then, we
compute
$\lambda_1=\min\{ \lambda_{\alpha}$, $\lambda_{\beta}, \lambda_{\delta} \}$.

Second, by using $\lambda_1$, we compute the indices $I_i'(\alpha)$,
$I_i'(\beta)$, and $I_i'(\delta)$ for all $i=1,2,\ldots,n$. This can
be done in $O(n)$ time. For each $i\in [1,n]$, if $[1,I'_{i}(\alpha)]\cap
[I'_{i}(\beta),n]\cap [I'_{i}(\delta),n]\neq \emptyset$, we
compute $a_i$ (i.e., the smallest index in the above intersection)
and add $i$ to the set $\calI$ (initially $\calI=\emptyset$). Hence, all
such $a_i$'s and $\calI$ can be computed in $O(n)$ time.

If $\calI=\emptyset$, then we return $\lambda_1$ as $\lambda^*$.

If $\calI\neq \emptyset$, then we compute $\lambda_p$ in $O(n\log n)$ time by
Lemma~\ref{lem:70}. We proceed to compute $d^1_{\max}(i,a_i)$ for all
$i\in \calI'$ by Lemma~\ref{lem:100}, and then find
the smallest feasible value $\lambda'$ in the set $\{d^1_{\max}(i,a_i) \ |\ i\in \calI'\}$ in $O(n\log n)$
time.
Finally, we return $\min\{\lambda_1,\lambda_p,\lambda'\}$ as
$\lambda^*$.

The above computes $\lambda^*$ in $O(n\log n)$ time. Applying
$\lambda=\lambda^*$ on our decision algorithm can eventually find an optimal
edge in additional $O(n)$ time.
\end{proof}




\bibliographystyle{plain}
\bibliography{reference}

%




\end{document}